\newtheorem{prop}{Proposition}
\newtheorem{thm}{Theorem}
\newtheorem{lem}{Lemma}
\newtheorem{cor}{Corollary}
\newtheorem{rem}{Remark}
\theoremstyle{definition}
\newtheorem{definition}{Definition}
\DeclareMathOperator*{\argmin}{arg\,min}
\DeclareMathOperator*{\rank}{rank}
\DeclareMathOperator*{\col}{col}
\DeclareMathOperator*{\set}{Set}
\renewcommand{\Re}{\mathbb{R}}
\renewcommand{\paragraph}[1]{\smallskip\noindent\textbf{#1.} }
\newcommand{\BM}{\begin{bmatrix}}
\newcommand{\EM}{\end{bmatrix}}
\newcommand{\BBM}{\big[\begin{matrix}}
\newcommand{\EEM}{\end{matrix}\big]}
\newcommand{\bbm}{[\begin{matrix}}
\newcommand{\eem}{\end{matrix}]}
\title{Analysis of the least sum-of-minimums estimator for switched systems}
\author{Laurent  Bako 
\thanks{L. Bako  is with Laboratoire Amp\`{e}re (UMR CNRS 5005) -- Ecole Centrale de Lyon -- Universit\'{e} de Lyon,  69134, Ecully, France. 
E-mail: {\footnotesize \tt laurent.bako@ec-lyon.fr}. 
}
}
\begin{document}

\maketitle
\setstretch{1}

\begin{abstract}
This paper considers a particular parameter estimator for switched systems and analyzes its properties. 
The estimator in question is defined as the map from the data set to the solution set of an optimization problem where the to-be-optimized cost function is a sum of pointwise infima over a finite set of sub-functions. This is a hard nonconvex problem.  The paper  studies some fundamental properties of this problem such as uniqueness of the solution or boundedness of the estimation error regardless of computational considerations. The interest of the analysis is to lay out the main influential properties of the data on the performance of this (ideal) estimator. 
\end{abstract}

\begin{IEEEkeywords}
System identification, switched systems, sparsity, data richness, robustness to outliers.
\end{IEEEkeywords}

\section{Introduction}
A switched system is defined by a finite set  of dynamic systems together with a map, called the switching law,  which selects over time which system (subsystem) is activated \cite{Liberzon03-Book,Sun05-Book}. The switching law may be time-driven, event-driven or state-driven. 
Such systems can be viewed  as formal descriptions of physical phenomena taking place in, for example, power converters \cite{Lunze-Lamnabhi2009-Book}, video sequences (from segmentation perspective) \cite{Vidal08-Automatica}. Finding mathematical representations of switched systems is fundamental for the purpose of control, analysis or diagnosis. In this paper we discuss the theoretical performances/properties of a particular method for identifying a switched model from measurements.  

The problem of identifying switched systems directly from input-output data has been largely investigated in the recent literature. Examples of contributions include the works reported in \cite{Vidal03,Bako11-Automatica,Ozay12-TAC,Pillonetto16-Automatica,Goudjil16-ECC} most of which rely on numerical optimization. Some surveys of the topic can be found in \cite{Lauer19-Book,Garulli12-SYSID,Paoletti07} (see the references therein). It is fair to remark that a large number of computational methods have been proposed for estimating the parameters of switched systems. However, an important aspect that is not well understood yet is how the properties of the data quantitatively impact the performance of estimation methods operating on those data. In other words, the necessary properties of informativity of the data which favor correct estimation is still to be further investigated. 
In the current work we take a step forward in the study of such informativity properties.   
Note that so far, only a very few works have considered the fundamental question of characterizing data informativity (richness) in the context of  switched system identification \cite{Petreczky11-CDC}, \cite{Vidal08-Automatica}. \cite{Petreczky11-CDC} sketches a broad purpose condition of persistence of excitation for estimating switched state-space realizations. As to the characterization formulated in \cite{Vidal08-Automatica}, it can be interpreted as a rank condition in a lifted space (resulting from polynomial embedding of the regressors). However, neither of these contributions proposed a characterization of the parametric estimation error bound as an explicit function of the informativity degree of the regression data.

The goal of this paper is to analyze the properties of a particular estimator which we call here the Least Sum-of-Minimums  estimator (LSM) for switched system identification. This estimator maps the data  to the parameter space (of the constituent subsystems) by associating to a given data set the minimizing set of some data-dependent cost function. The cost function is formed as a sum of pointwise infima of the prediction errors associated to each subsystem. While the prediction errors may be measured in the LSM framework with multiple different loss functions, we focus specifically on the case of the absolute deviation loss function. We note that the LSM estimator is neither analytically expressible, nor numerically solvable directly at a reasonable computational price. Heuristics exist however that allow to approach the solution with, sometimes, guarantees of optimality.  For a numerical approach to this problem we refer for example, to \cite{Lauer18-Automatica}. The perspective taken here is formal rather than computational, the goal being to lay out the properties the data should enjoy to allow for an adequate retrieval of the system parameters, at least in principle.  In the wake of our previous work reported in \cite{Bako11-Automatica}, we first derive conditions on the data that guarantee exact recoverability of the true parameter matrix in the hypothetical  scenario where the measurements would be essentially noise-free. A striking property of the absolute deviation loss (used in the framework of the LSM estimator) is that it allows for exact recovery even in the face of a \textit{sparse noise}, provided that the number of nonzero values in the sparse noise sequence does not exceed a certain threshold prescribed by the informativity degree of
the data.  In the more realistic situations where the data are affected by both \textit{dense and sparse noise}, we provide parametric error bounds for the estimates delivered by the estimator. The interest of our results reside in the fact that they reveal the impact of the data informativity on the attainable performance of the (ideal) switched system estimator. This feature makes them potentially useful for optimal experiment design, that is, the  process of defining adequately the data-generating experimental conditions that would lead to the smallest (estimation) uncertainty bound.     

\paragraph{Outline}
We state the switched system identification problem in Section \ref{sec:ID} and define the LSM estimator. We start the analysis by considering essentially the noiseless scenario in Section \ref{sec:Basic-Properties}  and then the noisy one in Section \ref{sec:Error-Bounds}. The main conclusions of our study are recapitulated  in Section \ref{sec:Conclusion}.

\paragraph{Notation}
$\Re$ denotes the set of real numbers; $\Re_+$ is the set of nonnegative real numbers. 
For a matrix $A=\bbm a_1 & \cdots & a_s\eem \in \Re^{n\times s}$, we use $\set(A)$ to denote the finite set formed with the columns of $A$, i.e., $\set(A)=\left\{a_1,\ldots,a_s\right\}$. If $\mathcal{S}$ is a finite set, then $\left|\mathcal{S}\right|$ denotes the cardinality of $\mathcal{S}$. If $x\in \Re$ then $\left|x\right|$ is the absolute value of $x$. For $x=\big[\begin{matrix}x_1 & \cdots& x_n\end{matrix}\big]\in \Re^n$, $\left\|x\right\|_0$ will refer to the $\ell_0$ norm of $x$ (namely the number of nonzero entries in the vector $x$); and $\left\|x\right\|_1=\sum_{i}|x_i|$  will denote the $\ell_1$ norm of $x$. If $X\in \Re^{n\times N}$ is a matrix and $I\subset \left\{1,\ldots,N\right\}$ is a subset of the column index of $X$, then $X_I$ denotes the submatrix of $X$ formed with the columns of $X$ which are indexed by $I$.  Similarly, for a vector $v\in \Re^N$, $v_I$ refers to the subvector of $v$ consisting in the entries of $v$ indexed by $I$.

\section{The switched system identification problem}\label{sec:ID}
\subsection{The data-generating system}
Consider a (possibly nonlinear) switched system described by an equation of the form
\begin{equation}\label{eq:switched-sys}
	y_t=x_t^\top a_{\sigma(t)}^{\circ}+v_t,
\end{equation}
where $t\in \mathbb{Z}_+$ refers to discrete-time, $y_t\in \Re$ is the output of the system at time $t$, $x_t\in \Re^n$ is the regressor. As to $v_t$, it refers to potential additive noise component. $\sigma:\mathbb{Z}_+\rightarrow \mathbb{S}\triangleq \left\{1,\ldots,s\right\}$ defines a switching signal and  $a_i^{\circ}\in \Re^n$, $i\in \mathbb{S}$, denote some distinct parameter vectors. Eq. \eqref{eq:switched-sys} describes a switched system composed of $s$ dynamical subsystems each of which is activated one after another in time by the switching signal $\sigma$.

The model \eqref{eq:switched-sys} captures the situations where the regressor $x_t$ is directly observed or obtained through an intermedirary nonlinear mapping of some observable signal $z_t\in \Re^d$. We will assume that 
\begin{equation}\label{eq:varphi(zt)}
	x_t=\varphi(z_t)
\end{equation}
 where $\varphi:\Re^d\rightarrow\Re^n$ is some (known) linear or nonlinear map. Hence, depending on the choice of the mapping $\varphi$, the model \eqref{eq:switched-sys} can describe both linear and  nonlinear switched systems.

We further observe that the system represented by \eqref{eq:switched-sys} can be static, in which case $z_t$ is an unstructured  multivariate input vector, or dynamic. In this latter case, $z_t$ in \eqref{eq:varphi(zt)} may assume the form
\begin{equation}\label{eq:xt}
	z_t=\bbm y_{t-1} & \cdots& y_{t-n_a} & u_t^\top  & u_{t-1}^\top & \cdots & u_{t-n_b}^\top\eem^\top \in \Re^d
\end{equation}
with  $n_a$ and $n_b$ being some integers and $u_t\in \Re^{n_u}$  the input of the system. Note that $n_a$ can be taken equal to zero in which case $x_t$ reduces to $x_t=\bbm u_t^\top  & u_{t-1}^\top & \cdots & u_{t-n_b}^\top\eem^\top$ (hence yielding a switched nonlinear system of Finite Impulse Response type). 

\subsection{The least sum of minimums estimator}
For convenience we collect the true parameter vectors $a_i^{\circ}\in \Re^n$ from \eqref{eq:switched-sys} in a matrix $A^{\circ} = \bbm a_1^{\circ} & \cdots & a_s^{\circ}\eem\in \Re^{n\times s}$ which we call the true parameter matrix. Given a collection of $N$ data points 
\begin{equation}\label{eq:wN}
	\varpi^N=((x_1,y_1), \ldots,(x_N,y_N))
\end{equation}
generated by the switched system \eqref{eq:switched-sys}, the estimation problem of interest here is to estimate the parameter matrix $A^{\circ}$. 

The focus of this paper is this estimation problem. We  consider that the number $s$ of subsystems and the structural parameters $(n_a,n_b)$ entering the definition of $x_t$ in \eqref{eq:varphi(zt)}-\eqref{eq:xt} are known a priori.  Our goal is to design a map, called estimator, which maps the data $\varpi^N$ to the set of parameters describing the constituent subsystems of the switched system \eqref{eq:switched-sys}. To begin with the approach taken in this paper to such an estimation problem, let $\mathbb{T}$ and $\mathbb{S}$ denote the index sets of the data and the subsystems respectively, i.e., $\mathbb{T}=\left\{1, \ldots,N\right\}$ and $\mathbb{S}=\left\{1, \ldots,s\right\}$. Use the notation $\Sigma$ to denote the set of all maps $\sigma:\mathbb{T}\rightarrow\mathbb{S}$ (called here switching signals).    
Consider the cost function  $\mathcal{J}^{\circ}:\Re^{n\times s}\times \Sigma\rightarrow\Re_+$ defined by 
$$\mathcal{J}^{\circ}(A,\sigma)=\sum_{t=1}^N\big|y_t-a_{\sigma(t)}^\top x_t\big| $$
where $A\in \Re^{n\times s}$ and $\sigma\in \Sigma$. 
Then a natural estimator of $A^{\circ}$ can be defined as the set-valued map $\Psi:{(\Re^{n}\times\Re)}^N\rightarrow \Re^{n\times s}$, 
$$\Psi(\varpi^N)=\Big\{\set(\hat{A}): \exists \hat{\sigma}\in \Sigma, (\hat{A},\hat{\sigma})\in \argmin_{A,\sigma}\mathcal{J}^{\circ}(A,\sigma)\Big\} $$
$\Psi(\varpi^N)$ is the set of all sets $\set(\hat{A})$ for all $\hat{A}\in\Re^{n\times s}$ such that $(\hat{A},\hat{\sigma})$ is a minimizer of $\mathcal{J}^{\circ}(A,\sigma)$ for some switching signal $\hat{\sigma}$. 
If we let 
\begin{equation}\label{eq:CostJ(A)}
	\mathcal{J}(A)=\sum_{t=1}^N\min_{i=1,\ldots,s}\big|y_t-a_i^\top x_t\big|
\end{equation} 
then it  can be easily shown that 
\begin{equation}\label{eq:PsiN}
	\Psi(\varpi^N)=\Big\{\set(\hat{A}): \hat{A}\in \argmin_{A}\mathcal{J}(A)\Big\}.
\end{equation}
Hence, minimizing $\mathcal{J}^{\circ}(A,\sigma)$ is equivalent to minimizing $\mathcal{J}(A)$ in  \eqref{eq:CostJ(A)}. The so defined $\Psi$ will be called the least sum-of-minimums (LSM) estimator. Because the prediction error is measured here in term of the absolute value loss function, we may also refer to $\Psi$ in the sequel as the absolute deviation LSM estimator. 
We start by observing that solving numerically any of these formulations of the switched identification problem is quite hard. The focus of this paper is not on this computational aspect but on the formal properties of the map $\Psi$. More precisely, we are interested in characterizing  conditions (on the data-generating system \eqref{eq:switched-sys} and on the properties of the data) under which $\Psi(\varpi^N)$ may contain a singleton (unique solution) or may be located at a bounded distance from the true parameter matrix $A^{\circ}$.  The primary interest of such conditions is to emphasize the main influential factors of the estimator's performance. From this perpective, we do not expect the intended properties to be necessarily numerically verifiable but to have a rather qualitative flavor which may serve for experiment design for instance. 


%
\section{Basic properties of the estimator}\label{sec:Basic-Properties}
We start by introducing  some definitions.  
\noindent For any matrix $A=\bbm a_1 & \cdots & a_s\eem\in \Re^{n\times s}$, let $\sigma_A\in \Sigma$ be a switching signal  satisfying
\begin{equation}\label{eq:sigmaA}
	\sigma_A(t)\in \argmin_{i\in \mathbb{S}}\big|y_t-x_t^\top a_i\big|
\end{equation}
for all $t\in \mathbb{T}$. The defining constraint \eqref{eq:sigmaA} of the switching signal $\sigma_A$ allows indeed for multiple choices of $\sigma_A(t)$ whenever $\argmin_{i\in \mathbb{S}}\big|y_t-x_t^\top a_i\big|\subset \mathbb{S}$ is not a singleton. One simple choice to solve this issue would be to set arbitrarily $\sigma_A(t)$ to be equal to the smallest element of $\argmin_{i\in \mathbb{S}}\big|y_t-x_t^\top a_i\big|$. However, for the purpose of our analysis we will define such $\sigma_A(t)$ in a more specific way. 
Consider  the index set
\begin{equation}
	I_i(A)=\left\{t\in \mathbb{T}: \sigma_A(t)=i\right\}. 
\end{equation}
 Then for all $A\in \Re^{n\times s}$, we have $I_i(A)\cap I_j(A)=\emptyset$ for $i\neq j$ and $\mathbb{T}=\cup_{i=1}^sI_i(A)$. 
For reasons that will become clear in the rest of the paper, it is desired here that $\min_{i\in \mathbb{S}}\left|I_i(A)\right|$ be as large as possible. That is, we want the partition $\left\{I_i(A)\right\}_{i\in \mathbb{S}}$ of $\mathbb{T}$ to be as  balanced as possible in term of the cardinalities of its members. Hence, it is of interest  to use the possible extra-degree of freedom offered by Eq. \eqref{eq:sigmaA}  to select $\sigma_A$ so as to maximize $\min_{i\in \mathbb{S}}\left|I_i(A)\right|$ subject to the constraint \eqref{eq:sigmaA}. 
In case the maximizing $\sigma_A$ is  still not unique, we can make it unique for a given $A$ by selecting the one which assigns to each $t$, the smallest admissible index $i\in \mathbb{S}$. To sum up, given $A\in \Re^{n\times s}$,  $\sigma_A$ can be selected uniquely by following the process described above. 


\noindent Given $\sigma_A$, let us now define the vector $\phi(A)$ collecting the errors of the form $y_t-x_t^\top a_{\sigma_A(t)} $ for $t\in \mathbb{T}$, 
\begin{equation}\label{eq:phi(A)}
\phi(A)=\BM y_1-x_1^\top a_{\sigma_A(1)} & \cdots & y_N-x_N^\top a_{\sigma_A(N)}\EM^\top.
\end{equation}
 Then the cost function $\mathcal{J}(A)$ in \eqref{eq:CostJ(A)} is the $\ell_1$ norm of $\phi(A)$,  
$\mathcal{J}(A)=\left\|\phi(A)\right\|_1. $
Note in passing that  $\mathcal{J}(A)$ is invariant under column permutation of the matrix $A$. This property implies that $\mathcal{J}(A)$ is indeed a function of $\set(A)$. Note that this is an intrinsic property of the multiple-regression problem. In other words, the invariance property of $\mathcal{J}(A)$  does not constitute any restriction on the switching mechanism of the to-be-identified data-generating system \eqref{eq:switched-sys}.


\subsection{Informativity measure of data and  exact recovery}\label{subsec:Informativity}
For any $r\in \left\{0,\ldots,N\right\}$, denote with $\mathcal{S}_r\subset\Re^N$ the set of $r$-sparse vectors in $\Re^N$, i.e., 
\begin{equation}\label{eq:Sr}
	\mathcal{S}_r=\left\{w\in \Re^N:  \left\|w\right\|_0\leq r\right\}.
\end{equation}
  For $A\in \Re^{n\times s}$,  define the distance $\delta_r(A)$  from $\phi(A)$ to the set $\mathcal{S}_r$ by   
\begin{equation}\label{eq:delta-R}
	\delta_r(A)=\inf_w\Big\{\left\|\phi(A)-w\right\|_1:w\in \mathcal{S}_r  \Big\}. 
\end{equation}
The so-defined $\delta_r(A)$ represents in fact the sum of the $N-r$ smallest entries (in absolute value) of $\phi(A)$. In particular, $\delta_0(A)=\left\|\phi(A)\right\|_1$ and $\delta_N(A)=0$.

\noindent For any subset $\mathcal{T}$ of $\mathbb{T}$, let $\phi_{\mathcal{T}}(A)$ refer to a subvector of $\phi(A)$ formed with the entries indexed by $\mathcal{T}$.
\begin{definition}[Concentration ratio]\label{def:Concentration-Ratio}
Consider the dataset $\varpi^N$ and the associated  map $\phi$ defined in \eqref{eq:phi(A)}. Let $r\in \left\{0,\ldots,N\right\}$.  We call $r$-th concentration ratio of $\phi$ on the dataset $\varpi^N$ expressed in \eqref{eq:wN}, the number defined by 
\begin{equation}\label{eq:nuR}
	\begin{aligned}
	\xi_r(\varpi^N) = 	\sup_{\substack{(A,A')\in (\Re^{n\times s})^2\\ \mathcal{T}\subset \mathbb{T}}}&\left\{\dfrac{\left\|\phi_{\mathcal{T}}(A)-\phi_{\mathcal{T}}(A')\right\|_1}{\left\|\phi(A)-\phi(A')\right\|_1}:\right. \\
			&\quad \bigg.\phi(A)\neq \phi(A'), \left|\mathcal{T}\right|\leq r \bigg\}. 
\end{aligned}
\end{equation}
\end{definition}
The supremum is taken here with respect to any pair $(A,A')\in (\Re^{n\times s})^2$ such that $\phi(A)\neq \phi(A')$ and over all subsets $\mathcal{T}$ of $\mathbb{T}$ whose cardinality does not exceed $r$. The supremum exists because it is applied to a set which is upper-bounded by $1$. 

\noindent We interpret the concentration ratio as a function  which measures quantitatively different levels $r$ of informativity of the data.  
For a given level $r$, the data $\varpi^N$ are all the more informative as   $\xi_r(\varpi^N)$ is small. Ideally, we would like $\xi_r(\varpi^N)$ to be as small as possible for the largest possible level $r$.

\noindent Computing numerically  $\xi_r(\varpi^N)$ would require in general solving a hard combinatorial optimization problem, the complexity of which might not be affordable in practice.  It can however be more cheaply over-approximated thanks to the direct observation that $\xi_r(\varpi^N) \leq r \xi_1(\varpi^N)$. This is because searching for $\xi_1(\varpi^N)$ instead of $\xi_r(\varpi^N)$ alleviates considerably the combinatorial nature of the problem. Note in passing that $\xi_r(\varpi^N)$ is an increasing function of $r$ and satisfies $\xi_0(\varpi^N)=0$ and  $\xi_N(\varpi^N)=1$.


\begin{rem}\label{rem:SingleSubsystem}
In the special case where $s=1$ (i.e., the situation where \eqref{eq:switched-sys} reduces to a single subsystem), the matrix $A$ reduces to a single vector, say $A=a\in \Re^n$. We recover the classical linear regression problem.  Then
$$\begin{aligned}
	\phi(A)&=\BM (y_1-x_1^\top a) & (y_2-x_2^\top a) & \cdots & (y_N-x_N^\top a)\EM^\top\\
	   & = \mathbf{y}-X^\top a.
\end{aligned}$$ 
where $X=\bbm x_1 & \cdots & x_N\eem\in \Re^{n\times N}$ is a matrix collecting all the regressors  $\left\{x_t\right\}_{t\in \mathbb{T}}$ generated by \eqref{eq:switched-sys} and $\mathbf{y}=\bbm y_1 & \cdots & y_N\eem$ is the vector of all output samples. 
In this case, $\xi_r(\varpi^N)$ in \eqref{eq:nuR} takes the form
\begin{equation}\label{eq:nuR2}
	\xi_r^\circ(\varpi^N) = \sup_{\substack{\eta\in \Re^{n}\\ \mathcal{T}\subset \mathbb{T}}}\Bigg\{\dfrac{\left\|X_{\mathcal{T}}^\top \eta\right\|_1}{\left\|X^\top \eta \right\|_1}: \eta\neq 0, \left|\mathcal{T}\right|\leq r \Bigg\} 
\end{equation}
where it is assumed that $\rank(X)=n$, that is, $X$ is full row rank. The notation $X_{\mathcal{T}}$ refers to the matrix formed with the columns of $X$ indexed by $\mathcal{T}$.
We observe that in this case, $\xi_r^\circ(\varpi^N)$ depends only on the regressor matrix $X$. Moreover, it can be overestimated through the solution of a convex optimization, see \cite{Bako17-TAC}.  
\end{rem}

Using the concentration ratio introduced in \eqref{eq:nuR}, we can now state a fundamental lemma for our analysis (see Lemma \ref{lem:INEQ} below, which can be viewed as a special reformulation of Lemma 4.2 in \cite{Daubechies10}). To ease the proof, we start with a preliminary technical  lemma. 

\begin{lem}\label{lem:First-Lem}
Let   $r\in \left\{0,\ldots,N\right\}$ and $\mathcal{S}_r$ be defined as in \eqref{eq:Sr}. 
 Consider an arbitrary vector $v\in \Re^N$ and define\footnote{with the convention that $\mathcal{T}_r(v)=\emptyset$ for $r=0$. }  $\mathcal{T}_r(v)\subset \mathbb{T}$ to be the index set of the $r$ largest entries in absolute value of $v$.  
Then for all  $v'\in \Re^N$, 
\begin{equation}
\begin{aligned}
		\left\|v'-v\right\|_1-2&\big\|(v'-v)_{\mathcal{T}_r(v)}\big\|_1\\
		&\qquad \leq \left\|v'\right\|_1-\left\|v\right\|_1+2\inf_{w\in \mathcal{S}_r}\left\|w-v\right\|_1
\end{aligned}
\end{equation}
\end{lem}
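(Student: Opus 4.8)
The plan is to decompose each $\ell_1$ norm according to the index set $\mathcal{T}_r(v)$ and its complement, and then exploit two facts: first, that $w$ ranges over $r$-sparse vectors so the infimum on the right is essentially the tail mass of $v$ off its $r$ largest entries; second, that the reverse triangle inequality controls $\|v'\|_1$ from below on the support $\mathcal{T}_r(v)$. Write $T=\mathcal{T}_r(v)$ and $T^c=\mathbb{T}\setminus T$, and split every vector as a sum of its restriction to $T$ and to $T^c$.

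First I would rewrite the infimum term. Since any $w\in\mathcal{S}_r$ has at most $r$ nonzero entries, the best $r$-sparse approximation of $v$ keeps the $r$ largest-magnitude entries of $v$ and zeros out the rest; hence $\inf_{w\in\mathcal{S}_r}\|w-v\|_1=\|v_{T^c}\|_1$, because $T=\mathcal{T}_r(v)$ indexes exactly those largest entries. So the right-hand side becomes $\|v'\|_1-\|v\|_1+2\|v_{T^c}\|_1$. Next I would lower-bound $\|v'\|_1$ by splitting it over $T$ and $T^c$ and applying the reverse triangle inequality on $T$:
\begin{equation}
\|v'\|_1=\|v'_T\|_1+\|v'_{T^c}\|_1\geq \big(\|v_T\|_1-\|(v'-v)_T\|_1\big)+\|v'_{T^c}\|_1.
\end{equation}
On $T^c$ I would instead use $\|v'_{T^c}\|_1\geq \|(v'-v)_{T^c}\|_1-\|v_{T^c}\|_1$, again by the reverse triangle inequality. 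Substituting both bounds into $\|v'\|_1-\|v\|_1$ and using $\|v\|_1=\|v_T\|_1+\|v_{T^c}\|_1$, the $\|v_T\|_1$ terms cancel and I expect to obtain
\begin{equation}
\|v'\|_1-\|v\|_1+2\|v_{T^c}\|_1\geq \|(v'-v)_{T^c}\|_1-\|(v'-v)_T\|_1.
\end{equation}

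Finally I would relate this to the left-hand side. Since $\|v'-v\|_1=\|(v'-v)_T\|_1+\|(v'-v)_{T^c}\|_1$, I can write $\|(v'-v)_{T^c}\|_1=\|v'-v\|_1-\|(v'-v)_T\|_1$, so the right side above equals $\|v'-v\|_1-2\|(v'-v)_T\|_1$, which is exactly the claimed left-hand side with $T=\mathcal{T}_r(v)$. Chaining the two displayed inequalities then yields the lemma. The main obstacle I anticipate is bookkeeping care rather than any deep idea: I must make sure the reverse triangle inequality is applied in the correct direction on each of $T$ and $T^c$ (it is used to lower-bound $\|v'\|_1$, so the subtracted term is always the perturbation $\|(v'-v)_{\cdot}\|_1$), and I must verify that the identification of the infimum with the tail $\|v_{T^c}\|_1$ is valid even in the degenerate cases $r=0$ (where $T=\emptyset$ by the stated convention and both sides reduce to $\|v'\|_1\leq\|v'\|_1$ after the cancellation) and $r=N$ (where $T^c=\emptyset$ and the inequality becomes trivial).
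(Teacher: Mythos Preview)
Your argument is correct and is essentially the paper's proof run in the reverse direction: both decompose over $T=\mathcal{T}_r(v)$ and $T^c$, identify $\inf_{w\in\mathcal{S}_r}\|w-v\|_1=\|v_{T^c}\|_1$, and apply the triangle inequality once on $T$ and once on $T^c$; the paper starts from $\|v'-v\|_1$ and works up to the right-hand side, whereas you start from $\|v'\|_1$ and work down, but the ingredients and the resulting chain of inequalities are identical. (One small slip in your parenthetical: for $r=0$ the statement reduces to the ordinary triangle inequality $\|v'-v\|_1\le\|v'\|_1+\|v\|_1$, not to $\|v'\|_1\le\|v'\|_1$; your main argument already covers this case correctly.)
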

\begin{proof}
See Appendix \ref{Proof-lem:First-Lem}. 
\end{proof}
\begin{lem}\label{lem:INEQ}
Let $r\in \left\{0,\ldots,N\right\}$. Consider the dataset $\varpi^N$ as in \eqref{eq:wN} and  $\xi_r(\varpi^N)$ as defined in \eqref{eq:nuR}. If $\xi_r(\varpi^N)<1/2$, then 
\begin{equation}\label{eq:Bounding-phi(A)-phi(A')}
\begin{aligned}
		\left\|\phi(A')-\phi(A)\right\|_1\leq &\dfrac{1}{1-2\xi_r(\varpi^N)}\left(\mathcal{J}(A')-\mathcal{J}(A)+2\delta_r(A)\right) \\
		& \qquad \forall (A,A')\in \Re^{n\times s}\times \Re^{n\times s}
\end{aligned}
\end{equation}
with $\phi(A)$, $\mathcal{J}(A)$ and $\delta_r(A)$ defined in \eqref{eq:phi(A)}, \eqref{eq:CostJ(A)} and \eqref{eq:delta-R} respectively.
\end{lem}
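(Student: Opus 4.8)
The plan is to use Lemma \ref{lem:First-Lem} as the engine, instantiating the generic vectors $v,v'$ with $\phi(A),\phi(A')$, and then to absorb the troublesome cross-term $\|(v'-v)_{\mathcal{T}_r(v)}\|_1$ using the definition of the concentration ratio $\xi_r(\varpi^N)$. Concretely, I would first set $v=\phi(A)$ and $v'=\phi(A')$ in Lemma \ref{lem:First-Lem}. With this choice, $\|v'\|_1-\|v\|_1 = \mathcal{J}(A')-\mathcal{J}(A)$ by the identity $\mathcal{J}(A)=\|\phi(A)\|_1$ recorded just after \eqref{eq:phi(A)}, and $\inf_{w\in\mathcal{S}_r}\|w-v\|_1 = \delta_r(A)$ directly from the definition \eqref{eq:delta-R}. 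Substituting, the conclusion of Lemma \ref{lem:First-Lem} becomes
\begin{equation*}
\left\|\phi(A')-\phi(A)\right\|_1 - 2\big\|(\phi(A')-\phi(A))_{\mathcal{T}_r(\phi(A))}\big\|_1 \leq \mathcal{J}(A')-\mathcal{J}(A)+2\delta_r(A).
\end{equation*}

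The key step is then to bound the subtracted term from above in terms of the full norm. Here I would invoke Definition \ref{def:Concentration-Ratio}: the index set $\mathcal{T}=\mathcal{T}_r(\phi(A))$ has cardinality at most $r$, so it is an admissible competitor in the supremum \eqref{eq:nuR}. Hence, provided $\phi(A)\neq\phi(A')$,
\begin{equation*}
\big\|(\phi(A')-\phi(A))_{\mathcal{T}_r(\phi(A))}\big\|_1 \leq \xi_r(\varpi^N)\,\left\|\phi(A')-\phi(A)\right\|_1.
\end{equation*}
Plugging this into the previous display yields $(1-2\xi_r(\varpi^N))\|\phi(A')-\phi(A)\|_1 \leq \mathcal{J}(A')-\mathcal{J}(A)+2\delta_r(A)$, and since the hypothesis $\xi_r(\varpi^N)<1/2$ makes the coefficient $1-2\xi_r(\varpi^N)$ strictly positive, dividing through gives exactly \eqref{eq:Bounding-phi(A)-phi(A')}.

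The main obstacle is handling the degenerate case $\phi(A)=\phi(A')$, which is excluded from the supremum in \eqref{eq:nuR} and so blocks the direct application of the concentration-ratio bound. Fortunately this case is trivial: when $\phi(A)=\phi(A')$ the left-hand side of \eqref{eq:Bounding-phi(A)-phi(A')} is zero, while the right-hand side equals $\tfrac{1}{1-2\xi_r(\varpi^N)}\cdot 2\delta_r(A)\geq 0$ since $\delta_r(A)\geq 0$ as a distance and the leading factor is positive, so the inequality holds automatically. I would dispatch this case in one line at the start and then carry out the argument above under the standing assumption $\phi(A)\neq\phi(A')$. A secondary point worth stating explicitly is that $\mathcal{T}_r(\phi(A))$ genuinely satisfies $|\mathcal{T}_r(\phi(A))|\leq r$ (with equality unless $\phi(A)$ has fewer than $r$ nonzero entries), which is immediate from its definition as the index set of the $r$ largest entries of $\phi(A)$; this is what legitimizes its use as a feasible $\mathcal{T}$ in \eqref{eq:nuR}.
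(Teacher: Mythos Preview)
Your proposal is correct and follows essentially the same route as the paper: instantiate Lemma~\ref{lem:First-Lem} with $v=\phi(A)$, $v'=\phi(A')$, then invoke the definition of $\xi_r(\varpi^N)$ on the index set $\mathcal{T}_r(\phi(A))$ to absorb the cross-term, and divide by the positive factor $1-2\xi_r(\varpi^N)$. Your explicit treatment of the degenerate case $\phi(A)=\phi(A')$ and the verification that $|\mathcal{T}_r(\phi(A))|\leq r$ are small refinements over the paper's presentation, which leaves these implicit.
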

\begin{proof}
Let $\mathcal{T}$ be a subset of $\mathbb{T}$ containing the indices of the $r$ largest entries of $\phi(A)$ in absolute value.  
We apply the result of Lemma \ref{lem:First-Lem} with $v=\phi(A)$ and $v'=\phi(A')$, which leads immediately to  
\begin{equation}\label{eq:Ineq1}
	\begin{aligned}
		\left\|\phi(A')-\phi(A)\right\|_1-&2\left\|\phi_{\mathcal{T}}(A')-\phi_{\mathcal{T}}(A)\right\|_1\\
		&	\leq \left\|\phi(A')\right\|_1-\left\|\phi(A)\right\|_1+2\delta_r(A)
\end{aligned}
\end{equation}
where $\delta_r(A)$ is defined as in \eqref{eq:delta-R}. 
From the definition \eqref{eq:nuR} of $\xi_r$, it can further be observed that 
$$\left\|\phi_{\mathcal{T}}(A')-\phi_{\mathcal{T}}(A)\right\|_1\leq \xi_r(\varpi^N) \left\|\phi(A')-\phi(A)\right\|_1, $$
which in turn implies that
$\left(1-2\xi_r(\varpi^N)\right)\left\|\phi(A')-\phi(A)\right\|_1$ is smaller than the left hand side term of \eqref{eq:Ineq1}. 
We therefore get
$$\begin{aligned}
	\big(1-2\xi_r(\varpi^N)\big)&\left\|\phi(A')-\phi(A)\right\|_1\\
	&\qquad \leq \left\|\phi(A')\right\|_1-\left\|\phi(A)\right\|_1+2\delta_r(A)
\end{aligned} $$ and the result follows. 
\end{proof}
\begin{rem}
In the scenario of Remark \ref{rem:SingleSubsystem}, the result of Lemma \ref{lem:INEQ} would read as
\begin{equation}
	\big\|X^\top \left(a'-a\right)\big\|_1\leq \dfrac{1}{1-2\xi_r^\circ(\varpi^N)}\left(\mathcal{J}(a')-\mathcal{J}(a)+2\delta_r(a)\right)
\end{equation}
with $\xi_r^\circ(\varpi^N)$ as in \eqref{eq:nuR2}. 
Hence if $X$ is full row rank then the left hand side constitutes a data-dependent norm on  the error $a'-a$. If we let 
$\lambda=\inf_{\left\|\eta\right\|_1=1}\left\|X^\top \eta\right\|_1$, then $\big\|a'-a\big\|_1\leq \dfrac{1}{\lambda(1-2\xi_r^\circ(\varpi^N))}\left(\mathcal{J}(a')-\mathcal{J}(a)+2\delta_r(a)\right)$.   
\end{rem}
\noindent 
By interchanging the roles of $A$ and $A'$ in the  inequality \eqref{eq:Bounding-phi(A)-phi(A')} one can obtain 
$$\left\|\phi(A)-\phi(A')\right\|_1\leq \dfrac{1}{1-2\xi_r(\varpi^N)}\left(\mathcal{J}(A)-\mathcal{J}(A')+2\delta_r(A')\right)$$
Summing this with \eqref{eq:Bounding-phi(A)-phi(A')} then yields the following inequality  
\begin{equation}
	\left\|\phi(A')-\phi(A)\right\|_1\leq \dfrac{1}{1-2\xi_r(\varpi^N)}\left(\delta_r(A')+\delta_r(A)\right). 
\end{equation}
Another immediate consequence of Lemma \ref{lem:INEQ} can be stated as follows:
\begin{lem}\label{lem:dist-to-opt}
If $\xi_r(\varpi^N)<1/2$ for some $r\in \left\{0,\ldots,N\right\}$, then for all  $\hat{A}\in \argmin_A \mathcal{J}(A)$ and for all $A\in \Re^{n\times s}$, 
\begin{equation}\label{eq:distance-to-optimal}
	\big\|\phi(A)-\phi(\hat{A})\big\|_1\leq \dfrac{2}{1-2\xi_r(\varpi^N)}\delta_r(A). 
\end{equation}
Moreover, if there exists a matrix $\tilde{A}$ such that $\|\phi(\tilde{A})\|_0\leq r$ then 
$$ \begin{aligned}
	\argmin_A\mathcal{J}(A)&=\left\{A\in \Re^{n\times s}: \big\|\phi(A)\big\|_0\leq r \right\}\\
	&=\left\{A\in \Re^{n\times s}: \phi(A)=\phi(\tilde{A}) \right\}
\end{aligned}$$
\end{lem}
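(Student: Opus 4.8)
The plan is to extract both claims from the single inequality \eqref{eq:Bounding-phi(A)-phi(A')} of Lemma~\ref{lem:INEQ}, which under the hypothesis $\xi_r(\varpi^N)<1/2$ is available for every pair of matrices. For the bound \eqref{eq:distance-to-optimal} I would apply \eqref{eq:Bounding-phi(A)-phi(A')} with $\hat{A}$ in the primed slot and the arbitrary $A$ in the unprimed slot, obtaining $\|\phi(\hat{A})-\phi(A)\|_1\le\frac{1}{1-2\xi_r(\varpi^N)}\big(\mathcal{J}(\hat{A})-\mathcal{J}(A)+2\delta_r(A)\big)$. Since $\hat{A}$ minimizes $\mathcal{J}$ we have $\mathcal{J}(\hat{A})-\mathcal{J}(A)\le 0$, so that term may be dropped, leaving precisely the claimed estimate with factor $\frac{2}{1-2\xi_r(\varpi^N)}$.

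For the second part, the first observation I would record is that the hypothesis $\|\phi(\tilde{A})\|_0\le r$ states exactly that $\phi(\tilde{A})\in\mathcal{S}_r$, so its $\ell_1$ distance to $\mathcal{S}_r$ vanishes, i.e.\ $\delta_r(\tilde{A})=0$. I would then invoke Lemma~\ref{lem:INEQ} once more, now with $\tilde{A}$ in the unprimed slot and an arbitrary $A$ in the primed slot; after setting $\delta_r(\tilde{A})=0$ this reduces to the key inequality $\|\phi(A)-\phi(\tilde{A})\|_1\le\frac{1}{1-2\xi_r(\varpi^N)}\big(\mathcal{J}(A)-\mathcal{J}(\tilde{A})\big)$. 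Because the left-hand side is nonnegative and the coefficient is positive, this forces $\mathcal{J}(A)\ge\mathcal{J}(\tilde{A})$ for every $A$; hence $\tilde{A}$ is a global minimizer and $\min_A\mathcal{J}(A)=\mathcal{J}(\tilde{A})$. This single inequality is the hinge of the whole argument.

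With $\tilde{A}$ now known to be a minimizer, I would read off the two set equalities from the same inequality used in both directions. If $A$ is any minimizer then $\mathcal{J}(A)=\mathcal{J}(\tilde{A})$, so the right-hand side vanishes and $\phi(A)=\phi(\tilde{A})$; conversely if $\phi(A)=\phi(\tilde{A})$ then $\mathcal{J}(A)=\|\phi(A)\|_1=\mathcal{J}(\tilde{A})$, making $A$ a minimizer. This establishes $\argmin_A\mathcal{J}(A)=\{A:\phi(A)=\phi(\tilde{A})\}$. For the remaining equality with $\{A:\|\phi(A)\|_0\le r\}$, one inclusion is immediate, since $\phi(A)=\phi(\tilde{A})$ yields $\|\phi(A)\|_0=\|\phi(\tilde{A})\|_0\le r$; for the reverse inclusion I would note that $\|\phi(A)\|_0\le r$ gives $\delta_r(A)=0$, and feeding this into Lemma~\ref{lem:INEQ} (with $\tilde{A}$ primed and $A$ unprimed) together with $\mathcal{J}(\tilde{A})\le\mathcal{J}(A)$ again forces $\phi(A)=\phi(\tilde{A})$.

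The argument is thus a repeated application of the one inequality of Lemma~\ref{lem:INEQ}, and I do not anticipate a genuine obstacle. The only point demanding care is the bookkeeping of which matrix occupies the primed versus unprimed slot at each step, so that the relevant $\delta_r$ term---either $\delta_r(\tilde{A})=0$ or $\delta_r(A)=0$---is the one that appears and can legitimately be discarded.
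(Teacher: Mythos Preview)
Your proposal is correct and follows essentially the same route as the paper: both arguments deduce everything from repeated applications of the inequality in Lemma~\ref{lem:INEQ}, using $\delta_r(\tilde{A})=0$ as the key observation. The only cosmetic difference is that the paper reuses the already-established bound \eqref{eq:distance-to-optimal} (with $A=\tilde{A}$) to obtain $\phi(\tilde{A})=\phi(\hat{A})$ in one stroke, whereas you return to Lemma~\ref{lem:INEQ} directly each time; the logical content is identical.
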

\begin{proof}
By Eq. \eqref{eq:Bounding-phi(A)-phi(A')}, we have
$$ \big\|\phi(A)-\phi(\hat{A})\big\|_1\leq \dfrac{1}{1-2\xi_r(\varpi^N)}\left(\mathcal{J}(\hat{A})-\mathcal{J}(A)+2\delta_r(A)\right) $$
for all $A\in \Re^{n\times s}$.
Because $\mathcal{J}(\hat{A})-\mathcal{J}(A)\leq 0$, this yields immediately \eqref{eq:distance-to-optimal}.  
The second statement follows from the fact that if $\|\phi(\tilde{A})\|_0\leq r$, then $\delta_r(\tilde{A})=0$. Therefore,  replacing $A$ with $\tilde{A}$ in \eqref{eq:distance-to-optimal} shows that $\phi(\tilde{A})=\phi(\hat{A})$ and so, $\mathcal{J}(\tilde{A})=\mathcal{J}(\hat{A})$. Hence such an $\tilde{A}$  is necessarily in $\argmin_A\mathcal{J}(A)$. On the other hand, since $\phi(\tilde{A})=\phi(\hat{A})$, any $\hat{A} \in \argmin_A \mathcal{J}(A)$ satisfies $\|\phi(\hat{A})\|_0\leq r$ hence concluding the proof. 
\end{proof}

An interpretation of Lemma \ref{lem:dist-to-opt}  is that if the data $\varpi^N$ used to construct the map $\phi$ in \eqref{eq:phi(A)}
are generated by the switched system \eqref{eq:switched-sys} and if the data is sufficiently informative in the sense that $\xi_r(\varpi^N)<1/2$ for some $r$ and the system parameter vectors are such that $\|\phi(A^{\circ})\|_0\leq r$ over the data, with $A^{\circ}$ denoting the true parameter matrix (see Eq. \eqref{eq:switched-sys}),  then $\set(A^{\circ})\in \Psi(\varpi^N)$.  At this step, a question that needs to be discussed further is whether $\set(A^{\circ})$ may be the unique member of $\Psi(\varpi^N)$.  For this purpose we need a property  of uniform rank  on the data $X$.

\begin{definition}[An integer measure of genericity] \cite{Bako11-Automatica} 
\label{def:genericity}
Let $X\in \Re^{n\times N}$ be a data matrix satisfying $\rank(X)=n$.
The $n$-genericity index of $X$, denoted   $\nu_n(X)$,   is defined as 
the minimum integer $m$ such that any $n\times m$ submatrix of $X$  has rank $n$,  
\begin{equation}\label{eq:Nu_n_X}
\nu_n(X)=\!\min\Big\{m:\forall \: \mathcal{S}\subset \mathbb{T} \mbox{ with }\left|\mathcal{S}\right|=m, \:  \rank(X_\mathcal{S})=n\Big\}.
\end{equation}
Here, $X_\mathcal{S}$ is a matrix formed with the columns of $X$ indexed by $\mathcal{S}$. 
\end{definition}
This definition implies that any submatrix of $X\in \Re^{n\times N}$ having at least $\nu_n(X)$ columns (with $n\leq \nu_n(X)\leq N$), has full row rank. The smaller $\nu_n(X)$, the more generic the regression data $X$ are said to be. According to this rough criterion, the most generic data $X$ achieve $\nu_n(X)=n$. This is typically the case  when the regressors $\left\{x_t\right\}_{t\in \mathbb{T}}$ are in \textit{general position} in $\Re^n$. Under some minimality conditions \cite{Petreczky20-IJNRC} on the data-generating system \eqref{eq:switched-sys}, if the input signal  $\left\{u_t\right\}$   is generated at random, then $\nu_n(X)=n$ with probability one.

Equipped with this notation and the definition of genericity index $\nu_n(X)$, we can now characterize uniqueness of the minimizer of $\mathcal{J}(A)$ based on the following lemma.

\begin{lem}\label{lem:injectivity}
Consider a dataset $\varpi^N$ of the form \eqref{eq:wN} and the notation $I_i(A)$ introduced at the beginning of Section \ref{sec:Basic-Properties}. Assume that there exists a matrix $\tilde{A}=\bbm \tilde{a}_1 & \cdots & \tilde{a}_s\eem\in \Re^{n\times s}$ with distinct columns $\tilde{a}_i$ such that 
\begin{equation}\label{eq:cond-Ni}
\min_{i\in \mathbb{S}}\big|I_i(\tilde{A})\big|\geq s\nu_n(X) 
\end{equation}
on the data $\varpi^N$. 
Then the following holds:
\begin{equation}
	\forall A\in \Re^{n\times s}, \: \phi(A)=\phi(\tilde{A}) \: \Rightarrow \:  \set(A)=\set(\tilde{A}).
\end{equation}
\end{lem}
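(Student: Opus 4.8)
The plan is to exploit the hypothesized equality $\phi(A)=\phi(\tilde{A})$ blockwise over the balanced partition $\left\{I_i(\tilde{A})\right\}_{i\in \mathbb{S}}$, and to reconstruct each column $\tilde{a}_i$ as one of the columns of $A$ by invoking the genericity index $\nu_n(X)$.

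First I would fix $i\in \mathbb{S}$ and read off the consequence of $\phi(A)=\phi(\tilde{A})$ on the block $I_i(\tilde{A})$. For every $t\in I_i(\tilde{A})$ we have $\sigma_{\tilde{A}}(t)=i$, so the $t$-th entry of $\phi(\tilde{A})$ is $y_t-x_t^\top \tilde{a}_i$; matching it with the $t$-th entry $y_t-x_t^\top a_{\sigma_A(t)}$ of $\phi(A)$ yields $x_t^\top\big(a_{\sigma_A(t)}-\tilde{a}_i\big)=0$ for all $t\in I_i(\tilde{A})$.

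The subtle point, and the one I expect to be the main obstacle, is that $\sigma_A(t)$ need not be constant over the block $I_i(\tilde{A})$, so this identity mixes several columns of $A$. I would resolve this by a pigeonhole argument tailored to the hypothesis. Since $\sigma_A$ takes values in $\mathbb{S}$ with $|\mathbb{S}|=s$ and $\big|I_i(\tilde{A})\big|\geq s\,\nu_n(X)$ by assumption \eqref{eq:cond-Ni}, there must exist some $j\in \mathbb{S}$ such that the set $\mathcal{T}_{ij}=\left\{t\in I_i(\tilde{A}):\sigma_A(t)=j\right\}$ has at least $\nu_n(X)$ elements. On $\mathcal{T}_{ij}$ the identity becomes $x_t^\top\big(a_j-\tilde{a}_i\big)=0$, i.e. $X_{\mathcal{T}_{ij}}^\top\big(a_j-\tilde{a}_i\big)=0$. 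Because $|\mathcal{T}_{ij}|\geq \nu_n(X)$, Definition \ref{def:genericity} guarantees that $X_{\mathcal{T}_{ij}}$ has full row rank $n$, so $X_{\mathcal{T}_{ij}}^\top$ has trivial kernel and we conclude $a_j=\tilde{a}_i$. Hence $\tilde{a}_i\in \set(A)$ for every $i\in \mathbb{S}$, which gives the inclusion $\set(\tilde{A})\subseteq \set(A)$.

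Finally I would close the argument by a cardinality count. The columns of $\tilde{A}$ are distinct, so $\big|\set(\tilde{A})\big|=s$, while $\set(A)\subseteq\left\{a_1,\ldots,a_s\right\}$ gives $\big|\set(A)\big|\leq s$; combined with the inclusion $\set(\tilde{A})\subseteq \set(A)$ this forces $\set(A)=\set(\tilde{A})$, as claimed. The only delicate ingredient is the non-constancy of $\sigma_A$ on each block, and the factor $s$ in the bound \eqref{eq:cond-Ni} is precisely what makes the pigeonhole step yield a subset $\mathcal{T}_{ij}$ large enough (at least $\nu_n(X)$ columns) to trigger the full-rank conclusion.
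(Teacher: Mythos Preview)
Your proof is correct and essentially identical to the paper's: both use $\phi(A)=\phi(\tilde{A})$ to get $x_t^\top(\tilde{a}_i-a_{\sigma_A(t)})=0$ on each block $I_i(\tilde{A})$, then a pigeonhole argument (the paper phrases it by contradiction, you state it directly) to find some $j$ with $|I_i(\tilde{A})\cap I_j(A)|\geq \nu_n(X)$, and finally the genericity index to force $\tilde{a}_i=a_j$. Your closing cardinality count is a slightly more explicit version of the paper's one-line conclusion that distinctness of the $\tilde{a}_i$ forces $\set(A)=\set(\tilde{A})$.
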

\begin{proof}
Let $A$ be such that $\phi(A)=\phi(\tilde{A})$. Then for all $t\in \mathbb{T}$, $y_t-x_t^\top a_{\sigma_A(t)}=y_t-x_t^\top \tilde{a}_{\sigma_{\tilde{A}}(t)}$, which is equivalent to 
$x_t^\top(\tilde{a}_{\sigma_{\tilde{A}}(t)}-a_{\sigma_{A}(t)})=0$ for all $t\in \mathbb{T}$.  \\
 The next step of the proof is to show that for any  $i\in \mathbb{S}$ there exists $j^\star\in \mathbb{S}$ such that $I_{ij^\star}\triangleq I_i(\tilde{A})\cap I_{j^\star}(A)$ has a cardinality larger than or equal to $\nu_n(X)$. For this purpose we proceed by contradiction. Take an arbitrary $i\in \mathbb{S}$ and assume that  $\left|I_{ij}\right|<\nu_n(X)$ $\forall j\in \mathbb{S}$. 
Noting that
$$I_i(\tilde{A})=I_i(\tilde{A})\cap \mathbb{T} =I_i(\tilde{A})\cap (\cup_{j=1}^s I_j(A))=\cup_{j=1}^s I_{ij}, 
$$
we obtain
$|I_i(\tilde{A})| \leq \sum_{j=1}^s|I_{ij}|<s\nu_n(X).$
 But this constitutes a contradiction to the assumption \eqref{eq:cond-Ni}. In conclusion, for all $i\in \mathbb{S}$, there exists a $j^\star$ such that $|I_{ij^\star}|\geq \nu_n(X)$. 
Now we observe that for all $t\in I_{ij^\star}$, $x_t^\top(\tilde{a}_i-a_{j^\star})=0$ and so,  $X_{I_{ij^\star}}^\top(\tilde{a}_i-a_{j^\star})=0$. But since $|I_{ij^\star}|\geq \nu_n(X)$, we have $\rank(X_{I_{ij^\star}})=n$, which implies that  $\tilde{a}_i=a_{j^\star}$.  Since all columns of $\tilde{A}$ are distinct (no repetition), we conclude that $\tilde{A}$ and $A$ have the same columns up to a permutation which is equivalent to saying that  $\set(\tilde{A})=\set(A)$. 
\end{proof}
It is interesting to note that in the absence of noise in \eqref{eq:switched-sys},  having the true parameter matrix $A^{\circ}$ to obey \eqref{eq:cond-Ni} is a sufficient condition for exact recovery of that matrix from the data. What this means is that if $v_t=0$ for all $t$ and if all the subsystems have been sufficiently excited in the sense that condition \eqref{eq:cond-Ni} holds for $\tilde{A}=A^{\circ}$, then 
$ \Psi(\varpi^N)=\left\{\set(A^{\circ})\right\}$. 

The following theorem recapitulates the discussion of this section. 
\begin{thm}\label{thm:Estimator-Uniqueness}
Consider the dataset $\varpi^N$ in \eqref{eq:wN}, generated by the switched system \eqref{eq:switched-sys}. 
Assume that:
\begin{itemize}
\item $\varpi^N$ is informative enough in the sense that $\xi_r(\varpi^N)<1/2$ for some $r\in \left\{0,\ldots,N\right\}$; let then 
$$r^*(\varpi^N)=\max\left\{r: \xi_r(\varpi^N)<1/2 \right\}. $$
\item There exists a matrix $\tilde{A}\in \Re^{n\times s}$ satisfying the condition \eqref{eq:cond-Ni} and $\|\phi(\tilde{A})\|_0\leq r^*(\varpi^N)$.  
\end{itemize}
Then the estimator $\Psi$ defined in \eqref{eq:PsiN} satisfies
\begin{equation}
	\Psi(\varpi^N)=\big\{\set(\tilde{A})\big\}.
\end{equation}
\end{thm}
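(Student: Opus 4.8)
The plan is to assemble the theorem directly from Lemma \ref{lem:dist-to-opt} and Lemma \ref{lem:injectivity}, which between them already carry all the analytic content; the theorem is essentially the conjunction of their conclusions specialized to the sharpest admissible sparsity level. First I would fix $r=r^*(\varpi^N)$, so that by the very definition of $r^*(\varpi^N)$ we have $\xi_r(\varpi^N)<1/2$. The second hypothesis supplies a matrix $\tilde{A}$ with $\|\phi(\tilde{A})\|_0\leq r^*(\varpi^N)=r$, which is exactly the extra assumption required to invoke the second (``Moreover'') statement of Lemma \ref{lem:dist-to-opt}. That lemma then identifies the minimizer set as
$$\argmin_A \mathcal{J}(A)=\big\{A\in\Re^{n\times s}:\phi(A)=\phi(\tilde{A})\big\}.$$

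Given this characterization, the definition \eqref{eq:PsiN} of the estimator yields
$$\Psi(\varpi^N)=\big\{\set(A):\phi(A)=\phi(\tilde{A})\big\}.$$
The next step is to collapse the right-hand side to a single set. Here I would apply Lemma \ref{lem:injectivity}: since $\tilde{A}$ satisfies condition \eqref{eq:cond-Ni}, every $A$ with $\phi(A)=\phi(\tilde{A})$ must obey $\set(A)=\set(\tilde{A})$. Hence the image of $\{A:\phi(A)=\phi(\tilde{A})\}$ under $\set(\cdot)$ is the single element $\set(\tilde{A})$, giving $\Psi(\varpi^N)\subseteq\{\set(\tilde{A})\}$. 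The reverse inclusion is immediate, since $\tilde{A}$ trivially belongs to $\{A:\phi(A)=\phi(\tilde{A})\}=\argmin_A\mathcal{J}(A)$ (equivalently, $\|\phi(\tilde{A})\|_0\leq r$ forces $\delta_r(\tilde{A})=0$, so \eqref{eq:distance-to-optimal} places $\tilde{A}$ in the minimizer set), whence $\set(\tilde{A})\in\Psi(\varpi^N)$. The two inclusions together give the claimed equality $\Psi(\varpi^N)=\{\set(\tilde{A})\}$.

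The one point that requires care, and which I regard as the main (though minor) obstacle, is the distinct-columns hypothesis needed by Lemma \ref{lem:injectivity}: that lemma is stated for a $\tilde{A}$ whose columns $\tilde{a}_i$ are pairwise distinct, whereas the theorem as stated asks only that $\tilde{A}$ satisfy \eqref{eq:cond-Ni}. I would therefore make explicit that $\tilde{A}$ is taken with distinct columns, inheriting the standing setup of Lemma \ref{lem:injectivity} and remaining consistent with the distinctness of the $a_i^{\circ}$ in \eqref{eq:switched-sys} in the intended application $\tilde{A}=A^{\circ}$. With distinctness in hand, the conclusion of Lemma \ref{lem:injectivity} applies verbatim and no further argument is needed; all the genuinely quantitative work, namely the role of the concentration-ratio threshold $1/2$ and of the rank/genericity condition \eqref{eq:cond-Ni}, has already been discharged in the two supporting lemmas.
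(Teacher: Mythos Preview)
Your proposal is correct and follows essentially the same route as the paper: invoke Lemma~\ref{lem:dist-to-opt} at level $r=r^*(\varpi^N)$ to identify $\argmin_A\mathcal{J}(A)=\{A:\phi(A)=\phi(\tilde{A})\}$, then apply Lemma~\ref{lem:injectivity} to collapse this to $\{\set(\tilde{A})\}$. Your explicit flagging of the distinct-columns hypothesis required by Lemma~\ref{lem:injectivity} is a worthwhile observation that the paper's proof leaves implicit.
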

\begin{proof}
To begin with, note that for $r^*$ defined as in the statement of the theorem, it holds that $\delta_{r^*}(\tilde{A})=0$ (see Eq. \eqref{eq:delta-R} for the definition of $\delta_r$). Now, since the conditions of Lemma \ref{lem:dist-to-opt} are satisfied, we can apply it to infer that  if $\hat{A}\in \argmin_A\mathcal{J}(A)$, then $\phi(\hat{A})=\phi(\tilde{A})$ so that $\mathcal{J}(\tilde{A})=\min_A \mathcal{J}(A)$. 
Conversely,  it is immediate to see that any $A'\in \Re^{n\times s}$ which satisfies $\phi(A')=\phi(\tilde{A})$ lies necessarily in $\argmin_A\mathcal{J}(A)$. Hence we can write 
$$ \argmin_A\mathcal{J}(A)=\left\{A\in \Re^{n\times s}: \phi(A)=\phi(\tilde{A}) \right\}$$
Applying Lemma \ref{lem:injectivity}, we can then write 
$$ \argmin_A\mathcal{J}(A)=\left\{A\in \Re^{n\times s}: \set(A)=\set(\tilde{A}) \right\}$$
and so, from \eqref{eq:PsiN} we see that  $\Psi(\varpi^N)=\big\{\set(\tilde{A})\big\}$. 
\end{proof}
An interpretation of Theorem \ref{thm:Estimator-Uniqueness} is that if the data are sufficiently informative, then the set-valued estimator $\Psi(\varpi^N)$ returns only a singleton. We would of course like this singleton to coincide with the true set of parameter vectors $\left\{a_i^{\circ}\right\}_{i\in \mathbb{S}}$. For this to hold, it suffices that  the true parameter matrix $A^{\circ}$ satisfies the second condition of the theorem. Note that such a condition is readily satisfied (with at least $r^*=0$) when there is no noise in the data (i.e., $v_t=0$ in \eqref{eq:switched-sys} for all $t\in \mathbb{T}$) provided that each subsystem generates enough data.  Moreover, by the second condition of the theorem, exact recovery of the true parameter matrix $A^{\circ}$ is still achievable by the estimator $\Psi$ when $\left\{v_t\right\}$ is a \textit{sparse noise} sequence  containing at most $r^*$ nonzero instances, regardless of the magnitude of these nonzero values. Hence, the larger $r^*$ (i.e., the richer the regression data $\varpi^N$), the more outliers the least absolute deviation LSM estimator can handle.  In contrast, the condition is unlikely to hold generally when \textit{dense noise} is present in the data.

\section{Error bounds in the presence of noise}\label{sec:Error-Bounds}
As mentioned above, we cannot hope for an exact recovery of the true parameter matrix $A^{\circ}$ by the estimator $\Psi$ from data affected by a dense  noise sequence $\left\{v_t\right\}$.  We need instead to search for  a possible  bound on the estimation error in function of the magnitude of the noise and the richness properties of the data. Indeed \eqref{eq:distance-to-optimal} almost provides such a bound. The remaining question to be investigated is, under which conditions we can lower-bound $\|\phi(\hat{A})-\phi(A^{\circ})\|_1$ by a norm applying directly to $\hat{A}-A^{\circ}$. 

\subsection{A key step towards the obtention of an error bound}

\noindent To begin with the analysis,  we introduce some useful technical tools, the first of which is the  class of $\mathcal{K}_\infty$ functions (see, e.g., \cite{Kellett14-MCSS}). This class of functions will be used to measure the increasing rate of the estimation error.
\begin{definition}[class-$\mathcal{K_\infty}$ functions]
A function $\alpha:\Re_+\rightarrow\Re_+$ is said to be of class-$\mathcal{K_\infty}$ if it is continuous, zero at zero, strictly increasing and satisfies $\lim_{s\rightarrow +\infty }\alpha(s)=+\infty$.
\end{definition}

Using this definition we can state a technical lemma which will play an important role in the analysis. 
\begin{lem}[\cite{Kircher19-TR}]\label{lem:minimum-value}
Let $f:\Re^n\rightarrow\Re_+$ be a positive continuous function satisfying the following properties:
\begin{itemize}
	\item Positive definiteness: $f(x)=0$ if and only if $x=0$
	\item Relaxed homogeneity: There exists a $\mathcal{K}_\infty$ function $q$ such that $f(x)\geq q(\frac{1}{\lambda})f(\lambda x)$ for all $\lambda>0$. 
\end{itemize}
Then for any norm $\left\|\cdot\right\|$ on $\Re^n$,  there exists a constant $\alpha>0$  such that $f(x)\geq \alpha q(\left\|x\right\|)$. 
\end{lem}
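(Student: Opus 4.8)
The plan is to reduce the global inequality to a minimization over the unit sphere by exploiting the relaxed homogeneity to rescale every nonzero point onto the sphere, and then to invoke compactness. First I would dispose of the trivial case $x=0$: here $f(0)=0$ and, since $q$ is of class-$\mathcal{K}_\infty$ and hence vanishes at zero, $q(\left\|0\right\|)=q(0)=0$, so the claimed inequality $f(x)\geq \alpha\, q(\left\|x\right\|)$ holds for any choice of $\alpha>0$.

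For $x\neq 0$ the key maneuver is to apply the relaxed homogeneity property with the specific scaling $\lambda=1/\left\|x\right\|>0$. Substituting this $\lambda$ into $f(x)\geq q(1/\lambda)\,f(\lambda x)$ gives
$$f(x)\geq q(\left\|x\right\|)\, f\!\left(\frac{x}{\left\|x\right\|}\right),$$
since $q(1/\lambda)=q(\left\|x\right\|)$ and $\lambda x=x/\left\|x\right\|$ lies on the unit sphere $S=\left\{u\in \Re^n:\left\|u\right\|=1\right\}$. This step transfers all the information about $f$ along the ray through $x$ to the single representative point $x/\left\|x\right\|\in S$, with the growth factor $q(\left\|x\right\|)$ absorbing the dependence on the magnitude of $x$.

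It then remains to lower bound $f$ uniformly on $S$. Since $\Re^n$ is finite dimensional, the unit sphere $S$ of any norm is compact; and $f$ is continuous, so it attains its infimum on $S$. Setting $\alpha:=\min_{u\in S} f(u)$, the positive definiteness of $f$ (together with the fact that every $u\in S$ is nonzero) guarantees $f(u)>0$ for each $u\in S$, whence $\alpha>0$. Combining this with the displayed inequality yields $f(x)\geq \alpha\, q(\left\|x\right\|)$ for all $x\neq 0$, and together with the $x=0$ case this establishes the lemma for the chosen $\alpha$.

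The only genuine obstacle is ensuring $\alpha>0$ rather than merely $\alpha\geq 0$: this is precisely where compactness of $S$ is indispensable, since without it the infimum of the strictly positive continuous function $f$ over the sphere could in principle equal zero (approached but not attained). Finite-dimensionality of $\Re^n$ is exactly what rules this out. I expect no difficulty in the algebraic rescaling step; the care lies entirely in the topological argument that the minimum over $S$ is both attained and strictly positive.
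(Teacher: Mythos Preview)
Your argument is correct. The paper itself does not supply a proof of this lemma---it is quoted from the external reference \cite{Kircher19-TR}---but your approach is precisely the one implicit in the paper's subsequent use of the result: in the proof of Theorem~\ref{thm:Error-Bound} the constant is taken to be $D=\inf_{\left\|\Lambda\right\|=1} g(\Lambda)$, which is exactly your $\alpha=\min_{u\in S} f(u)$, and the compactness-plus-positive-definiteness reasoning you give is the standard way to justify $D>0$.
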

Our goal now is to derive a  bound on a certain measure of the parametric estimation error between the true parameter matrix $A^{\circ}$ and the estimated ones $\hat{A}\in \argmin_A\mathcal{J}(A)$. Recalling that $\mathcal{J}(A)$ is invariant under column permutation of the matrix $A$, for this metric to be pertinent, it needs to be specified in terms of distance between the sets $\set(A^{\circ})$  and $\set(\hat{A})$. Hence we consider a metric $d$ of the form $d(A,A')=\left\|A-A'_\pi\right\|$ where $\left\|\cdot\right\|$ is a norm on $\Re^{n\times s}$ and  $\pi:\mathbb{S}\rightarrow\mathbb{S}$ is a permutation depending on the matrices $A$ and $A'$. Here, the notation $A'_\pi$ is used to refer to the matrix obtained by permuting the columns of $A$ as prescribed by $\pi$, $A'_\pi=\bbm a'_{\pi(1)} & \cdots & a'_{\pi(s)}\eem$. The existence of a permutation $\pi$ such that $d(A,A')$ is upper-bounded by $\|\phi(A)-\phi(A')\|_1$  will depend here on the partitions $\left\{I_i(A)\right\}_{i\in \mathbb{S}}$ and $\left\{I_i(A')\right\}_{i\in \mathbb{S}}$ achieved by $A$ and $A'$ respectively on the data set $\varpi^N$.  

\begin{definition}\label{def:comparability}
Consider the data set $\varpi^N$ in \eqref{eq:wN}, generated by the $s$-modes switched system \eqref{eq:switched-sys}. We say that two matrices $A\in \Re^{n\times s}$ and $A'\in\Re^{n\times s}$ are {\it comparable over the data set $\omega^N$} if there exists a permutation $\pi:\mathbb{S}\rightarrow \mathbb{S}$ such that $\left|I_i(A)\cap I_{\pi(i)}(A')\right|\geq \nu_n(X)$ for all $i\in \mathbb{S}$. 
\end{definition}
Note, from Lemma \ref{lem:injectivity} above, that any matrix $A\in \Re^{n\times s}$ such that $\min_{i\in \mathbb{S}}\left|I_i(A)\right|\geq s\nu_n(X)$  is comparable to any other matrix $A'$ with distinct columns satisfying $\phi(A)=\phi(A')$. In that case, it even holds that $A=A'_{\pi}$ for some permutation $\pi$ on $\mathbb{S}$. We state hereafter a sufficient condition for comparability. 
\begin{lem}\label{lem:comparability}
Consider a set $\varpi^N$  of input-output data generated by system \eqref{eq:switched-sys} as defined in \eqref{eq:wN}.  Let $A\in \Re^{n\times s}$ be a matrix obeying \eqref{eq:cond-Ni}. 
Then any matrix $A'\in \Re^{n\times s}$ satisfying
\begin{equation}\label{eq:Comparability-Cond}
	\begin{aligned}
		\left|I_i(A)\right|&+\left|I_j(A)\right| \\
	&\geq \max_{\ell\in \mathbb{S}}\big[\left|I_i(A)\cap I_{\ell}(A')\right|+\left|I_j(A)\cap I_{\ell}(A')\right|\big] \\
		 & \qquad \hspace{10pt} +2(s-1)\nu_n(X) \quad   \forall (i,j)\in \mathbb{S}^2, i\neq j,
	\end{aligned}
\end{equation}
 is comparable to $A$ over $\varpi^N$ in the sense of Definition \ref{def:comparability}. 
\end{lem}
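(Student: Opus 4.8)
The plan is to recast comparability as a purely combinatorial matching problem and settle it with Hall's marriage theorem. I would introduce the bipartite graph $G$ whose two vertex classes are both copies of $\mathbb{S}$ — the modes of $A$ on one side, the modes of $A'$ on the other — and place an edge between $i$ and $\ell$ exactly when $|I_i(A)\cap I_\ell(A')|\geq \nu_n(X)$. By Definition \ref{def:comparability}, $A$ and $A'$ are comparable over $\varpi^N$ if and only if $G$ admits a perfect matching, i.e. a permutation $\pi$ of $\mathbb{S}$ with $i\sim\pi(i)$ for all $i$. So the whole statement reduces to verifying Hall's condition, $|N(\mathcal{I})|\geq|\mathcal{I}|$ for every $\mathcal{I}\subseteq\mathbb{S}$, where $N(\mathcal{I})=\{\ell\in\mathbb{S}:\max_{i\in\mathcal{I}}|I_i(A)\cap I_\ell(A')|\geq\nu_n(X)\}$ is the set of modes of $A'$ reachable from $\mathcal{I}$.

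For the bookkeeping I would write $d_i(\ell)=|I_i(A)\cap I_\ell(A')|$, so that $\sum_{\ell\in\mathbb{S}}d_i(\ell)=|I_i(A)|$ since $\{I_\ell(A')\}_\ell$ partitions $\mathbb{T}$. The first observation is that every single mode has an incident edge: by \eqref{eq:cond-Ni} we have $|I_i(A)|\geq s\nu_n(X)$, hence by pigeonhole some $d_i(\ell)\geq\nu_n(X)$, which is Hall's condition for singletons. I would then argue by contradiction, assuming a deficient set $\mathcal{I}$ with $|\mathcal{I}|=k\geq 2$ and $N(\mathcal{I})=\mathcal{L}$, $|\mathcal{L}|=p\leq k-1$. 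For every $i\in\mathcal{I}$ and every $\ell\notin\mathcal{L}$ one has $d_i(\ell)\leq\nu_n(X)-1$, so the mass of each mode in $\mathcal{I}$ is concentrated on $\mathcal{L}$: $\sum_{\ell\in\mathcal{L}}d_i(\ell)\geq|I_i(A)|-(s-p)(\nu_n(X)-1)$.

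The base case $p=1$ is exactly the situation the hypothesis \eqref{eq:Comparability-Cond} is engineered to exclude, and it falls out cleanly. If $\mathcal{L}=\{\ell_0\}$, then for any two modes $i,j\in\mathcal{I}$ the concentration bound forces $d_i(\ell_0)>|I_i(A)|-(s-1)\nu_n(X)$ and $d_j(\ell_0)>|I_j(A)|-(s-1)\nu_n(X)$; adding these gives $d_i(\ell_0)+d_j(\ell_0)>|I_i(A)|+|I_j(A)|-2(s-1)\nu_n(X)$. On the other hand, \eqref{eq:Comparability-Cond} applied to the pair $(i,j)$ and evaluated at $\ell_0$ yields $d_i(\ell_0)+d_j(\ell_0)\leq|I_i(A)|+|I_j(A)|-2(s-1)\nu_n(X)$, a contradiction. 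In words: two modes of $A$ cannot both be driven onto one single mode of $A'$, which is the mechanism the pairwise condition controls.

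The hard part will be promoting this from $p=1$ to a general deficient set $2\leq p\leq k-1$. There the trapped mass $\sum_{\ell\in\mathcal{L}}d_i(\ell)$ is distributed over several modes of $A'$, so no individual column $\ell_0$ carries enough overlap to let \eqref{eq:Comparability-Cond} bite directly as above. My strategy would be to sum \eqref{eq:Comparability-Cond} over the pairs contained in $\mathcal{I}$ and over $\ell\in\mathcal{L}$, combine the resulting upper bound on the total overlap $\sum_{i\in\mathcal{I}}\sum_{\ell\in\mathcal{L}}d_i(\ell)$ with the lower bound obtained above from \eqref{eq:cond-Ni}, and hunt for an inconsistency; the delicate point is that the pairwise bound loses strength when aggregated, so one must identify the right column and the right pair at which it remains tight, and possibly exploit the fact that a minimal deficient set satisfies $|N(\mathcal{I})|=|\mathcal{I}|-1$ together with a matching saturating $\mathcal{L}$. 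This aggregation step — reducing a multi-mode deficiency back to the two-mode obstruction — is where I expect the real work of the proof to lie.
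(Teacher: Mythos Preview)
Your bipartite-matching reformulation is correct, and your $p=1$ argument is exactly the substance of the paper's proof. The paper proceeds by contradiction: assuming no injective $\pi$ exists, it asserts---without further justification---that this is \emph{equivalent} to the existence of a pair $(i,j)$ and a single index $\ell$ with $N(\{i\})=N(\{j\})=\{\ell\}$ (precisely your $p=1$ obstruction), and then derives from that obstruction the inequality $|I_i(A)|+|I_j(A)|<2(s-1)\nu_n(X)+d_i(\ell)+d_j(\ell)$, contradicting \eqref{eq:Comparability-Cond}. That is the entire argument. So the ``hard part'' you flag---promoting from $p=1$ to a general deficient set with $p\geq 2$---is not treated separately in the paper; it is absorbed into the asserted equivalence.

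Your instinct that this step requires care is well placed, because the claimed equivalence is not a general fact about bipartite graphs: a deficient set can have $|N(\mathcal{I})|\geq 2$ while no pair of left vertices collapses onto a single neighbour. Concretely, with $s=3$ take $d_i(1)=d_i(2)=M\geq 2\nu_n(X)$ and $d_i(3)=0$ for every $i\in\{1,2,3\}$. Then each $|I_i(A)|=2M\geq s\nu_n(X)$, so \eqref{eq:cond-Ni} holds; for every pair $(i,j)$ one has $|I_i(A)|+|I_j(A)|=4M\geq 2M+4\nu_n(X)=\max_\ell[d_i(\ell)+d_j(\ell)]+2(s-1)\nu_n(X)$, so \eqref{eq:Comparability-Cond} holds; yet $N(\{1,2,3\})=\{1,2\}$ and no perfect matching exists. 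This suggests the pairwise hypothesis \eqref{eq:Comparability-Cond} is genuinely too weak to rule out deficiencies with $p\geq 2$, so your aggregation strategy is unlikely to close the gap: the difficulty you identify appears to be a gap in the paper's argument (and possibly in the statement itself), not merely a missing detail in yours.
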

\begin{proof}
See Appendix \ref{Proof:lem:comparability}. 
\end{proof}
\noindent To illustrate the condition \eqref{eq:Comparability-Cond}, consider the simple case where $|\mathbb{S}|=s=2$. Then, under the assumption that $A$ is subject to \eqref{eq:cond-Ni}, $A$ and $A'$ are comparable over $\varpi^N$ if
$N\geq \max(|I_1(A')|,|I_2(A')|)+2\nu_n(X)$. 
Noting that $ \max(|I_1(A')|,|I_2(A')|)=N/2+1/2\big||I_1(A')|-|I_2(A')|\big|$ with the outer bars denoting the absolute value, \eqref{eq:Comparability-Cond} reduces to $N\geq 4\nu_n(X)+\big||I_1(A')|-|I_2(A')|\big|$. This relation identifies three factors  which promote comparability: (i) the data $X$ must be generic enough (i.e., $\nu_n(X)$ small); (ii) $A'$ must partition the data into sets of balanced cardinalities; (iii) the number $N$ of data must be large enough.

\begin{thm}\label{thm:Error-Bound}
Consider the dataset $\varpi^N$ in \eqref{eq:wN}, generated by the switched system \eqref{eq:switched-sys} and assume that $\xi_r(\varpi^N)<1/2$ for some $r\in \left\{0,\ldots,N\right\}$.   Let $(A,A')\in \Re^{n\times s}\times \Re^{n\times s}$ be a pair of comparable matrices with respect to $\omega^N$ (as defined in Eq. \eqref{eq:wN}). Let $\pi$ denote the associated permutation. Then  for any norm $\left\|\cdot\right\|$ on $\Re^{n\times s}$, there exists a strictly  positive number $D$ such that 
\begin{equation}\label{eq:Error-Bound}
	\|A'_{\pi}-A\|\leq \dfrac{1}{D\big(1-2\xi_r(\varpi^N)\big)}\big(\mathcal{J}(A')-\mathcal{J}(A)+2\delta_r(A)\big). 
\end{equation}
\end{thm}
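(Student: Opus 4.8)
The strategy is to combine Lemma~\ref{lem:INEQ}, which already bounds $\|\phi(A')-\phi(A)\|_1$ from above by the right-hand side of \eqref{eq:Error-Bound} (up to the factor $D$), with a lower bound of the form $\|\phi(A')-\phi(A)\|_1 \geq D\,\|A'_\pi - A\|$. Once such a lower bound is established, the theorem follows by a direct chain of inequalities. So the entire difficulty is concentrated in producing the lower bound, and this is where comparability and Lemma~\ref{lem:minimum-value} enter.

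\textbf{First step: construct the map to which Lemma~\ref{lem:minimum-value} applies.} Fix the permutation $\pi$ witnessing comparability and define, for the error matrix $E = A'_\pi - A$, the function
\begin{equation}\label{eq:plan-f}
f(E) = \big\|\phi(A + E_{\text{embed}}) - \phi(A)\big\|_1,
\end{equation}
viewed as a function of $E \in \Re^{n\times s}$ (identified with $\Re^{ns}$). The plan is to verify that $f$ satisfies the two hypotheses of Lemma~\ref{lem:minimum-value}: positive definiteness and relaxed homogeneity. For the relaxed homogeneity, I expect $f$ to be genuinely $1$-homogeneous or close to it, because $\phi$ is built from absolute values of affine functions $y_t - x_t^\top a_{\sigma(t)}$; scaling the error $E$ by $\lambda$ should scale the relevant differences linearly, so one can take $q(\cdot)$ to be the identity (or a linear map), giving $f(E) \geq q(1/\lambda) f(\lambda E)$. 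The subtle point is that $\phi$ depends on $E$ not only through the affine terms but also through the switching assignment $\sigma_{A'}$, which is piecewise-constant in $E$; I would handle this by restricting attention to the region of $E$ where the partition $\{I_i(A')\}$ is the comparable one, exactly as guaranteed by Definition~\ref{def:comparability}.

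\textbf{Second step: positive definiteness via comparability.} This is the crux. I must show $f(E) = 0 \Rightarrow E = 0$, i.e. that $\phi(A') = \phi(A)$ forces $A'_\pi = A$. Here comparability does the work: for each $i \in \mathbb{S}$ there are at least $\nu_n(X)$ time indices $t \in I_i(A) \cap I_{\pi(i)}(A')$ on which the two predictions coincide, so $x_t^\top(a_i - a'_{\pi(i)}) = 0$ for all such $t$, hence $X_{I_i(A)\cap I_{\pi(i)}(A')}^\top (a_i - a'_{\pi(i)}) = 0$. Because the index set has cardinality at least $\nu_n(X)$, Definition~\ref{def:genericity} gives full row rank, forcing $a_i = a'_{\pi(i)}$ for every $i$, i.e. $A = A'_\pi$. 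This is essentially the argument already used inside the proof of Lemma~\ref{lem:injectivity}, now localized column-by-column using the permutation $\pi$.

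\textbf{Conclusion.} With both hypotheses verified, Lemma~\ref{lem:minimum-value} yields a constant $D > 0$ (call $\alpha$ there $D$, absorbing $q = \mathrm{id}$) such that $\|\phi(A') - \phi(A)\|_1 = f(E) \geq D\,\|E\| = D\,\|A'_\pi - A\|$. Chaining this with Lemma~\ref{lem:INEQ} gives
\begin{equation}\label{eq:plan-final}
D\,\|A'_\pi - A\| \leq \|\phi(A')-\phi(A)\|_1 \leq \tfrac{1}{1-2\xi_r(\varpi^N)}\big(\mathcal{J}(A')-\mathcal{J}(A)+2\delta_r(A)\big),
\end{equation}
which is exactly \eqref{eq:Error-Bound} after dividing by $D$. \textbf{The main obstacle} is the positive-definiteness/homogeneity verification in the first two steps: the function $f$ is only piecewise affine in $E$ because of the $\argmin$ defining $\sigma_{A'}$, so I would need to argue carefully that the relaxed-homogeneity inequality of Lemma~\ref{lem:minimum-value} survives the switching discontinuities, most likely by noting that $f(E) \geq \|\,X_{J}^\top(\cdot)\,\|_1$ on the fixed comparable index set $J = \cup_i (I_i(A)\cap I_{\pi(i)}(A'))$, which already gives a homogeneous, positive-definite minorant to which the lemma can be applied directly.
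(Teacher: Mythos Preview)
Your plan is correct and follows the same architecture as the paper's proof: invoke Lemma~\ref{lem:INEQ} for the upper bound and manufacture a lower bound $\|\phi(A')-\phi(A)\|_1 \geq D\|A'_\pi - A\|$ via Lemma~\ref{lem:minimum-value}. Your own diagnosis of the obstacle is also accurate: applying Lemma~\ref{lem:minimum-value} directly to $f(E)=\|\phi(A+E)-\phi(A)\|_1$ does not work cleanly because the assignment $\sigma_{A+E}$ jumps with $E$ (breaking homogeneity) and because comparability is assumed only for the single pair $(A,A')$, not for every perturbation $E$ (so positive definiteness of $f$ is not available). The fix you sketch in the last paragraph---pass to the explicit minorant $\sum_{i\in\mathbb{S}}\|X_{I_i(A)\cap I_{\pi(i)}(A')}^\top\eta_i\|_1$ with $\eta_i=a_i-a'_{\pi(i)}$ and apply Lemma~\ref{lem:minimum-value} to \emph{that} function---is exactly the right move, and it is essentially what the paper does.

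The one difference worth flagging is in how the minorant is built. The paper does not freeze the index sets at $I_i(A)\cap I_{\pi(i)}(A')$; instead it defines
\[
g(\Lambda)=\inf_{\substack{(J_1,\ldots,J_s)\ \text{disjoint}\\ |J_i|\geq \nu_n(X)}}\ \sum_{i\in\mathbb{S}}\big\|X_{J_i}^\top\eta_i\big\|_1,
\]
taking the infimum over \emph{all} admissible $s$-tuples of disjoint index sets. Because the infimum is over a finite family, $g$ is still continuous, positive-definite (via the rank argument you gave in Step~2), and $1$-homogeneous, so Lemma~\ref{lem:minimum-value} applies just as well. What this buys is that the resulting constant $D=\inf_{\|\Lambda\|=1}g(\Lambda)$ depends only on the regressor matrix $X$ and the chosen norm, not on the particular pair $(A,A')$; your version would in principle produce a pair-dependent $D$. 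For the theorem as stated either construction suffices, but the universal $D$ is what the paper reuses downstream (Corollary~\ref{eq:cor-Error-Boundedness} and Lemma~\ref{lem:underestimate-D}).
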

\begin{proof}
We start by observing that all the conditions of Lemma \ref{lem:INEQ} are satisfied. As a consequence, Eq. \eqref{eq:Bounding-phi(A)-phi(A')} holds. Departing from this equation, we just need to find an appropriate underestimate of $\left\|\phi(A)-\phi(A')\right\|_1$.  To this end, note that 
$$\begin{aligned}
	\left\|\phi(A)-\phi(A')\right\|_1&=\sum_{t\in \mathbb{T}}\big|x_t^\top (a_{\sigma_A(t)}-a'_{\sigma_{A'}(t)})\big|\\
	& = \sum_{(i,j)\in \mathbb{S}^2}\: \: \sum_{t\in I_i(A)\cap I_j(A')}\big|x_t^\top (a_{i}-a'_{j})\big|\\
		& \geq \sum_{i\in \mathbb{S}}\: \: \sum_{t\in I_i(A)\cap I_{\pi(i)}(A')}\big|x_t^\top \eta_i\big|
\end{aligned} $$
where  $\eta_i=a_i-a'_{\pi(i)}$ with $\pi:\mathbb{S}\rightarrow\mathbb{S}$ denoting the permutation defining the comparability of $A$ and $A'$ (see Definition \ref{def:comparability}). Recall that $\left|I_i(A)\cap I_{\pi(i)}(A')\right|\geq \nu_n(X)$, $i=1,\ldots,s$. Let $g:\Re^{n\times s}\rightarrow\Re_+$ be the function  defined by  
\begin{equation}\label{eq:g(Lambda)}
	g(\Lambda)=\inf_{\substack{\left\{J_i\right\}_{i\in \mathbb{S}}\\|J_i|\geq \nu_n(X)}} \sum_{i\in \mathbb{S}}\big\|X_{J_i}^\top\eta_i\big\|_1
\end{equation}
where the infimum is taken over all $s$-tuples $(J_1,\ldots,J_s)$ of disjoint  subsets of $\mathbb{T}$ with cardinality larger or equal to $\nu_n(X)$. 
 Then by letting  $\Lambda=A-A'_\pi$, it follows from the inequality above that 
\begin{equation}\label{eq:Ineq-}
	\left\|\phi(A)-\phi(A')\right\|_1\geq g(\Lambda).
\end{equation}
Since the infimum in \eqref{eq:g(Lambda)} operates here on a finite set, it is reached by a certain $(J_1^\star,\ldots,J_s^\star)$. As a consequence $g$ can be expressed by $g(\Lambda)=\sum_{i\in \mathbb{S}}\big\|X_{J_i^\star}^\top\eta_i\big\|_1$. 
\noindent The rest of the proof consists in showing that the function $g$ satisfies the conditions of Lemma \ref{lem:minimum-value}. Clearly, $g$ is positive.  If for some $E=\bbm e_1 & \cdots & e_s\eem \in \Re^{n\times s}$,  $g(E)=0$, then   $X_{J_i^\star}^\top e_i=0$ for all $i=1,\ldots,s$. It follows, by the fact that $|J_i^\star|\geq \nu_n(X)$, that $e_i=0$. Hence  $E=0$ and consequently, $g$ is positive-definite.  Moreover, $g$ is continuous as a consequence of the $\ell_1$ norm being continuous.  
\noindent Finally, $g$ satisfies the relaxed homogeneity property with  the $\mathcal{K}_\infty$ function $q$ defined by $q(x)=x$.  We can therefore apply Lemma \ref{lem:minimum-value} to conclude that $g(\Lambda)\geq D\left\|\Lambda\right\|$ with  $D$ being the strictly positive number defined by
\begin{equation}\label{eq:D-inf}
	D=\inf_{\left\|\Lambda\right\|=1} g(\Lambda). 
\end{equation}
This concludes the proof. 
\end{proof}
The theorem establishes a bound on the metric $d(A,A')$ in case $A$ and $A'$ are comparable in the sense of Definition \ref{def:comparability}. For a given $r$, it is interesting to note that the bound displayed in \eqref{eq:Error-Bound} is all the smaller as the data are more generic (i.e., $\xi_r(\varpi^N)$ defined in \eqref{eq:nuR} is small for a relatively large $r$). We also note that if $A$ and $A'$ are not comparable as required in the statement of the theorem then, $\left\|A-A'_{\pi}\right\|$ can grow arbitrarily for any permutation $\pi$ while $\left\|\phi(A)-\phi(A')\right\|_1$ remains small. To see this, take for example $s=2$ and 
$$A=\begin{bmatrix}\tilde{a}_1 & \tilde{a}_2\end{bmatrix}, \quad A'=\begin{bmatrix}\tilde{a}'_1 & \beta \tilde{a}'_2\end{bmatrix}$$
with the $\tilde{a}_i$ and $\tilde{a}'_i$ being unit $\ell_2$-norm vectors and $\beta\in \Re$. Then for a given dataset $\varpi^N$ one can choose $\beta$ sufficiently large such that $\sigma_{A'}(t)=1$ for all $t\in \mathbb{T}$, i.e., $I_1(A')=\mathbb{T}$. For such values of $\beta$, $A$ and $A'$ are not comparable in the sense of Definition \ref{def:comparability}. We can see however that $\left\|\phi(A)-\phi(A')\right\|_1$ is independent of $\beta$ while $\left\|A-A'_{\pi}\right\|$ will increase arbitrarily as $\beta$ increases for any permutation $\pi$ on $\mathbb{S}=\left\{1,2\right\}$.  

\begin{rem}
Note that in the scope of Theorem \ref{thm:Error-Bound}, it is, in principle, possible to restrict the defining supremum  of $\xi_r(\varpi^N)$ in  \eqref{eq:nuR} only to all pairs $(A,A')$ of comparable matrices. The interest of such a slight reformulation is that it would produce a smaller value of $\xi_r(\varpi^N)$ and hence a potentially tighter bound in \eqref{eq:Error-Bound}.
\end{rem}

\subsection{Estimation error bound for the switched system}\label{subsec:Err-Bound}
\noindent An interesting situation is when $(A,A')$ is taken in Theorem \ref{thm:Error-Bound} to be equal to $(A^{\circ},\hat{A})$ with $\hat{A}\in \argmin_{A}\mathcal{J}(A)$. In this specific case, invoking the trick used to establish \eqref{eq:distance-to-optimal} yields the following statement. 
%
 
\begin{cor}\label{eq:cor-Error-Boundedness}
Consider the data $\varpi^N$ generated by system \eqref{eq:switched-sys} and assume that $\xi_r(\varpi^N)<1/2$ for some $r\geq 0$. Let   $\hat{A}\in \argmin_{A}\mathcal{J}(A)$. If $\hat{A}$ and the true parameter matrix $A^{\circ}$  are comparable in the sense of Definition \ref{def:comparability} with $\pi:\mathbb{S}\rightarrow \mathbb{S}$ denoting the associated comparability permutation, then for any norm $\left\|\cdot\right\|$ on $\Re^{n\times s}$, there exists a number $D>0$ such that 
\begin{equation}\label{eq:Estimation-Bound}
	\|\hat{A}_{\pi}-A^{\circ}\|\leq \dfrac{2}{D\big(1-2\xi_r(\varpi^N)\big)}\delta_r(A^{\circ}). 
\end{equation}
\end{cor}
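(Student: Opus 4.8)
The plan is to obtain the bound as a direct specialization of Theorem~\ref{thm:Error-Bound}, combined with the optimality of $\hat{A}$, in exactly the same way that Lemma~\ref{lem:dist-to-opt} was derived from Lemma~\ref{lem:INEQ}. First I would verify that the hypotheses of Theorem~\ref{thm:Error-Bound} are met by the pair consisting of $A^{\circ}$ and $\hat{A}$: the informativity assumption $\xi_r(\varpi^N)<1/2$ is part of the corollary's hypothesis, and comparability of $\hat{A}$ and $A^{\circ}$ (with associated permutation $\pi$) is precisely the standing assumption. I would then instantiate the theorem with $A:=A^{\circ}$ and $A':=\hat{A}$, which produces a constant $D>0$ (namely $D=\inf_{\|\Lambda\|=1}g(\Lambda)$ from the proof of that theorem) such that
\begin{equation*}
\|\hat{A}_{\pi}-A^{\circ}\|\leq \dfrac{1}{D\big(1-2\xi_r(\varpi^N)\big)}\big(\mathcal{J}(\hat{A})-\mathcal{J}(A^{\circ})+2\delta_r(A^{\circ})\big).
\end{equation*}

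Second, I would invoke the minimality of $\hat{A}$: since $\hat{A}\in\argmin_A\mathcal{J}(A)$, we have $\mathcal{J}(\hat{A})\leq\mathcal{J}(A^{\circ})$, so the cost difference $\mathcal{J}(\hat{A})-\mathcal{J}(A^{\circ})$ is nonpositive and may be dropped from the right-hand side. This yields
\begin{equation*}
\|\hat{A}_{\pi}-A^{\circ}\|\leq \dfrac{2}{D\big(1-2\xi_r(\varpi^N)\big)}\delta_r(A^{\circ}),
\end{equation*}
which is exactly the claimed estimate.

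The only point requiring care---rather than a genuine obstacle---is the orientation of the arguments. Theorem~\ref{thm:Error-Bound} bounds $\|A'_{\pi}-A\|$ and places the residual-sparsity defect $\delta_r(\cdot)$ on the \emph{first} argument $A$. To land the defect on the true matrix $A^{\circ}$ (so that the bound depends on the noise through $\delta_r(A^{\circ})$ rather than through $\delta_r(\hat{A})$), and to have $\pi$ act on $\hat{A}$ as written, I must assign $A^{\circ}$ to the $A$-slot and $\hat{A}$ to the $A'$-slot. Comparability being symmetric up to inverting the permutation (Definition~\ref{def:comparability} gives $|I_i(A^{\circ})\cap I_{\pi(i)}(\hat{A})|\geq\nu_n(X)$ in one orientation and the same condition with $\pi^{-1}$ in the other), this assignment is always available. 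Everything else is immediate, so no step is truly hard: the corollary is a one-line consequence of the preceding theorem together with the optimality of $\hat{A}$.
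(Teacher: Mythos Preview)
Your proposal is correct and matches the paper's own approach: the paper does not spell out a proof but states, just before the corollary, that one applies Theorem~\ref{thm:Error-Bound} with $(A,A')=(A^{\circ},\hat{A})$ and then uses ``the trick used to establish \eqref{eq:distance-to-optimal}'', i.e., drops the nonpositive term $\mathcal{J}(\hat{A})-\mathcal{J}(A^{\circ})$. Your remark on the orientation and the symmetry of comparability up to inverting $\pi$ is a harmless clarification and is consistent with how the paper phrases Definition~\ref{def:comparability} and the corollary.
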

Since $r$ can be any integer in $\left\{0,\ldots,N\right\}$ such that $\xi_r(\varpi^N)<1/2$, we can, at least formally, optimize the error bound over all such $r$'s. Hence, whenever the comparability condition holds true,  a better bound can, in principle, be obtained as 
\begin{equation}
	\big\|\hat{A}_{\pi}-A^{\circ}\big\|\leq \min_{r=0,\ldots,N}\Big\{\dfrac{2\delta_r(A^{\circ})}{D\big(1-2\xi_r(\varpi^N)\big)}: \xi_r(\varpi^N)<\dfrac{1}{2}\Big\} 
\end{equation}
As already remarked, $\delta_r(A^{\circ})$ measures how far $\phi(A^{\circ})$ is from the set $\mathcal{S}_r$ of all $r$-sparse signals in $\Re^N$. This is essentially a measure of the amount of noise $\left\{v_t\right\}$ in the system \eqref{eq:switched-sys} which generates the data $\varpi^N$. 
More specifically, $\delta_r(A^{\circ})$ equals the sum of the $N-r$ smallest elements in absolute value of the sequence $\left\{v_t^\circ\right\}_{t\in \mathbb{T}}$ defined by
\begin{equation}\label{eq:vo}
	v_t^\circ = v_t+x_t^\top (a_{\sigma(t)}^{\circ}-a_{\sigma_{A^{\circ}}(t)}^{\circ})
\end{equation}
with $\sigma$ denoting the true switching signal from \eqref{eq:switched-sys}. 
 From the definition of  $\sigma_{A^{\circ}}\in \Sigma$ (see Eq. \eqref{eq:sigmaA}), it is not hard to see that $|v_t^\circ|\leq |v_t|$ for all $t\in \mathbb{T}$ and so, $\delta_r(A^{\circ})\leq \left\|\mathbf{v}\right\|_{1,r} $ with  $\left\|\mathbf{v}\right\|_{1,r}$ denoting the sum, in absolute value, of the $N-r$ smallest entries  of $\left\{v_t\right\}_{t\in \mathbb{T}}$. It follows that under the conditions of Corollary \ref{eq:cor-Error-Boundedness}, 
$\|\hat{A}_{\pi}-A^{\circ}\|\leq 2/\big(D(1-2\xi_r(\varpi^N))\big)\left\|\mathbf{v}\right\|_{1,r}.$    Hence, by considering  the special case where  $r$ is taken equal to $0$ (this is a reasonable choice e.g., when there is no outlier in the data), we get $\|\hat{A}_{\pi}-A^{\circ}\|\leq 2/D\left\|\mathbf{v}\right\|_{1}.$Note that an underestimate $\hat{D}$ of the number $D$ can be numerically found as suggested in Appendix \ref{subsec:estimate-D}.  Using $\hat{D}$ (instead of $D$) in the expression of the bound yields however a more pessimistic value of the bound.  \\
A question we ask now is, under which condition we may have  $v_t^\circ = v_t$ from \eqref{eq:vo}. Such a condition is given in the following proposition. 
%
\begin{prop}\label{prop:distinguishability}
Consider the switched system \eqref{eq:switched-sys} driven by the switching signal $\sigma$ and the noise $\left\{v_t\right\}$. Then a necessary and sufficient condition for $\sigma_{A^{\circ}}= \sigma$ (irrespective of the values of $\sigma$ and those of the noise) is 
\begin{equation}\label{eq:Cond-sigA=sig}
	|v_t|<\dfrac{1}{2}\min_{\substack{(i,j)\in \mathbb{S}^2\\ i\neq j}}\big|x_t^\top (a_i^{\circ}-a_j^{\circ})\big|\: \forall t\in \mathbb{T}.
\end{equation}
\end{prop}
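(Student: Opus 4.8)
The plan is to reduce the statement to a collection of pointwise comparisons and then treat sufficiency and necessity separately. \emph{First}, I would rewrite the residual of each candidate mode. Since $y_t = x_t^\top a_{\sigma(t)}^{\circ} + v_t$, the residual of mode $i$ at time $t$ is $|y_t - x_t^\top a_i^{\circ}| = |v_t + x_t^\top(a_{\sigma(t)}^{\circ} - a_i^{\circ})|$, which equals $|v_t|$ for the true mode $i = \sigma(t)$. By the defining property \eqref{eq:sigmaA} of $\sigma_{A^{\circ}}$, the identity $\sigma_{A^{\circ}}(t) = \sigma(t)$ holds with $\sigma(t)$ the \emph{unique} minimizer (which is what makes the conclusion independent of the tie-breaking rule) exactly when $|v_t + x_t^\top(a_{\sigma(t)}^{\circ} - a_j^{\circ})| > |v_t|$ for every $j \neq \sigma(t)$. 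The proposition thus amounts to showing that \eqref{eq:Cond-sigA=sig} is equivalent to this family of strict inequalities holding for every $t$, every admissible active mode $\sigma(t)$, and every sign of $v_t$.

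For \emph{sufficiency} I would simply invoke the reverse triangle inequality. Writing $\Delta = x_t^\top(a_{\sigma(t)}^{\circ} - a_j^{\circ})$, the bound \eqref{eq:Cond-sigA=sig} gives $|\Delta| \geq \min_{i\neq j}|x_t^\top(a_i^{\circ}-a_j^{\circ})| > 2|v_t|$, whence $|v_t + \Delta| \geq |\Delta| - |v_t| > |v_t|$. Since this holds for every competing mode $j$ and every choice of active mode, the true mode is the strict minimizer uniformly in $\sigma$ and in the noise realization, giving $\sigma_{A^{\circ}} = \sigma$.

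For \emph{necessity} I would argue by contraposition and build an explicit counterexample. Suppose \eqref{eq:Cond-sigA=sig} fails at some $t_0$, and let the pair $(i^{\star}, j^{\star})$ attain $\min_{i\neq j}|x_{t_0}^\top(a_i^{\circ}-a_j^{\circ})|$, so that, with $\Delta^{\star} = x_{t_0}^\top(a_{i^{\star}}^{\circ} - a_{j^{\star}}^{\circ})$, one has $|\Delta^{\star}| \leq 2|v_{t_0}|$. Here the freedom in $\sigma$ and in the sign of the noise is exactly what is exploited: if $v_{t_0}$ and $\Delta^{\star}$ have opposite signs I take $\sigma(t_0) = i^{\star}$, otherwise $\sigma(t_0) = j^{\star}$. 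In either case the competing member of the pair yields residual $|v_{t_0} \pm \Delta^{\star}| \leq |v_{t_0}|$, so the true mode is not the unique minimizer and $\sigma_{A^{\circ}}(t_0)$ can differ from $\sigma(t_0)$; when the violation is strict the competing mode is strictly better and recovery provably fails.

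The main obstacle I anticipate is the \emph{boundary case} $|v_{t_0}| = \tfrac12 |\Delta^{\star}|$, where the construction produces an exact tie $|v_{t_0} \pm \Delta^{\star}| = |v_{t_0}|$ rather than a strict reversal. Whether $\sigma_{A^{\circ}}(t_0) = \sigma(t_0)$ then hinges on the specific (balance-based, smallest-index) tie-breaking rule fixed in Section \ref{sec:Basic-Properties}, so care is needed to align the equivalence with the strict inequality appearing in \eqref{eq:Cond-sigA=sig}; the clean reading is that \eqref{eq:Cond-sigA=sig} is precisely the condition guaranteeing a \emph{unique} minimizer, hence exact recovery robust to any tie-breaking, any direction of the noise, and any switching sequence. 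A secondary point to verify is that taking the global minimum over all pairs $(i,j)$, rather than only pairs involving the active mode, is both correct (it dominates every relevant comparison, giving sufficiency) and not wasteful (the minimizing pair is realized as an active-mode/competitor pair by the above choice of $\sigma(t_0)$, giving necessity).
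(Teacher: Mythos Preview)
Your proposal is correct and follows essentially the same route as the paper: sufficiency via the (reverse) triangle inequality, and necessity by exploiting the freedom to choose $\sigma(t)$ so as to flip the sign of $x_t^\top(a_i^{\circ}-a_j^{\circ})$. The only cosmetic difference is that the paper argues necessity directly---squaring $|v_t|<|v_t+x_t^\top(a_{\sigma(t)}^{\circ}-a_j^{\circ})|$ and then quantifying over $\sigma(t)$---whereas you do it by contraposition with an explicit adversarial choice of $\sigma(t_0)$; your treatment of the boundary tie case is in fact more careful than the paper's.
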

\begin{proof}
See Appendix \ref{Proof:prop:distinguishability}. 
\end{proof}
The term on the right hand side of \eqref{eq:Cond-sigA=sig} can be interpreted as a measure of how distinguishable the subsystems are with respect to each other. Hence, what the proposition says is that if the noise level is below a certain threshold (which depends on the parametric distinguishability of the subsystems and on some genericity condition on the regressors), then the true switching signal coincides with $\sigma_{A^{\circ}}$.\\
Finally, an interesting consequence of Proposition \ref{prop:distinguishability} is that, under condition \eqref{eq:Cond-sigA=sig}, we obtain from \eqref{eq:vo} that  $v_t^\circ=v_t$ for all $t\in \mathbb{T}$ with the consequence that $\delta_r(A^{\circ})$ reduces to $\|\mathbf{v}\|_{1,r}$. 


\subsection{On the comparability of $\hat{A}$ and $A^{\circ}$ }
According to Corollary \ref{eq:cor-Error-Boundedness}, a sufficient condition for the estimation error induced by the estimator $\Psi$ to be bounded as in \eqref{eq:Estimation-Bound},  is that of comparability of $\hat{A}$ and $A^{\circ}$ over $\varpi^N$ for all $\hat{A}$ such that $\set(\hat{A})\in \Psi(\varpi^N)$ (see Definition \ref{def:comparability}). Lemma \ref{lem:comparability} suggests that to favor the  comparability of $A^{\circ}$ and $\hat{A}$, the data $\varpi^N$ and the true parameter matrix $A^{\circ}$ should satisfy \eqref{eq:cond-Ni} and \eqref{eq:Comparability-Cond}. Indeed these conditions impose, though in a non trivial way, some constraints on the distinguishability of the modes composing the switched system, the magnitude of the noise, the excitation signal $\left\{u_t\right\}$ and the switching signal $\sigma$. 

\noindent Intuitively, if the level of the noise $\left\{v_t\right\}$ is low  and if the constituent subsystems are distinguishable enough, then the true parameter matrix $A^{\circ}$ and its estimate $\hat{A}$ should be comparable. We formalize this as follows. 
\begin{lem}\label{lem:Comparability-Estimate}
Assume that the  input-output data $\varpi^N$ \eqref{eq:wN}, generated by the $s$-mode switched system \eqref{eq:switched-sys} is such that $A^{\circ}$ obeys $\min_{i\in \mathbb{S}}\left|I_i(A^\circ)\right|\geq sm$ with $m\geq \nu_n(X)$.   
 Introduce the notation
\begin{equation}\label{eq:gamma}
	\gamma_m=\inf_{\substack{\left\|\eta\right\|_2=1\\|I|\geq m}}\big\|X_{I}^\top \eta\big\|_1,
\end{equation}
where the infimum is taken over all subsets $I$ of $\mathbb{T}$ with cardinality at least $m$ and over all $\eta\in \Re^n$ with unit $\ell_2$ norm. \\
If the subsystems of the switched system \eqref{eq:switched-sys} are parametrically distinguishable enough in the sense that
\begin{equation}\label{eq:distinguishable-2}
	\min_{i\neq j}\left\|a_i^{\circ}-a_j^{\circ}\right\|_2> \dfrac{2\delta_r(A^{\circ})}{\gamma_m\big(1-2\xi_r(\varpi^N)\big)}
\end{equation}
for some $r\in \left\{0,\ldots,N\right\}$ such that $\xi_r(\varpi^N)<1/2$, then $A^{\circ}$ and $\hat{A}$ are comparable over $\varpi^N$ in the sense of Definition \ref{def:comparability} for any $\hat{A}\in \argmin_{A}\mathcal{J}(A)$. 
\end{lem}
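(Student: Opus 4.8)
The plan is to establish comparability directly by exhibiting a permutation $\pi$ of $\mathbb{S}$ for which $\left|I_i(A^\circ)\cap I_{\pi(i)}(\hat A)\right|\geq\nu_n(X)$ holds for every $i\in\mathbb{S}$, which is exactly the requirement of Definition \ref{def:comparability}. The starting point is the distance-to-optimum estimate of Lemma \ref{lem:dist-to-opt}: since $\xi_r(\varpi^N)<1/2$ and $\hat A\in\argmin_A\mathcal{J}(A)$, applying that lemma with $A=A^\circ$ yields $\|\phi(A^\circ)-\phi(\hat A)\|_1\leq \tfrac{2}{1-2\xi_r(\varpi^N)}\delta_r(A^\circ)$. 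This furnishes the upper bound that I will eventually contradict. One should also record that $\gamma_m>0$: because $m\geq\nu_n(X)$, every submatrix $X_I$ with $|I|\geq m$ is full row rank, so $\|X_I^\top\eta\|_1>0$ for $\eta\neq 0$, and the infimum in \eqref{eq:gamma} over the compact unit sphere and the finite family of index sets is attained and strictly positive. In particular the right-hand side of \eqref{eq:distinguishable-2} is finite and the hypothesis is non-vacuous.

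Next I would build a candidate map. Fix $i\in\mathbb{S}$. The partition $\{I_j(\hat A)\}_{j\in\mathbb{S}}$ of $\mathbb{T}$ splits $I_i(A^\circ)$ into the $s$ disjoint pieces $I_i(A^\circ)\cap I_j(\hat A)$; since $\left|I_i(A^\circ)\right|\geq sm$, the pigeonhole principle produces an index $j^\star(i)$ with $\left|I_i(A^\circ)\cap I_{j^\star(i)}(\hat A)\right|\geq m\geq\nu_n(X)$. Thus the map $i\mapsto j^\star(i)$ already meets the cardinality requirement of comparability, and it remains only to prove that $j^\star$ is injective on the finite set $\mathbb{S}$, hence a permutation.

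The injectivity is the crux, and I would argue it by contradiction. Suppose $j^\star(i_1)=j^\star(i_2)=j$ for distinct $i_1,i_2$, and write $B_k=I_{i_k}(A^\circ)\cap I_j(\hat A)$ for $k=1,2$, so that $|B_k|\geq m$. On $B_k$ one has $\sigma_{A^\circ}(t)=i_k$ and $\sigma_{\hat A}(t)=j$, so the corresponding entry of $\phi(A^\circ)-\phi(\hat A)$ equals $x_t^\top(\hat a_j-a_{i_k}^\circ)$. Setting $\eta_k=\hat a_j-a_{i_k}^\circ$ and invoking the definition \eqref{eq:gamma} of $\gamma_m$ together with homogeneity of the $\ell_1$ norm, I get $\|X_{B_k}^\top\eta_k\|_1\geq\gamma_m\|\eta_k\|_2$. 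Summing the two blocks and using the triangle inequality $\|\eta_1\|_2+\|\eta_2\|_2\geq\|\eta_1-\eta_2\|_2=\|a_{i_1}^\circ-a_{i_2}^\circ\|_2$ gives $\|\phi(A^\circ)-\phi(\hat A)\|_1\geq\gamma_m\|a_{i_1}^\circ-a_{i_2}^\circ\|_2\geq\gamma_m\min_{i\neq j}\|a_i^\circ-a_j^\circ\|_2$. Combining this lower bound with the upper bound from the first paragraph yields $\min_{i\neq j}\|a_i^\circ-a_j^\circ\|_2\leq\tfrac{2\delta_r(A^\circ)}{\gamma_m(1-2\xi_r(\varpi^N))}$, which contradicts the strict distinguishability hypothesis \eqref{eq:distinguishable-2}. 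Hence $j^\star$ is injective, so $\pi:=j^\star$ is a permutation of $\mathbb{S}$ witnessing the comparability of $A^\circ$ and $\hat A$.

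I expect the injectivity step to be the main obstacle: the decisive maneuver is to have the two disjoint blocks $B_1,B_2$ mapping to the same $\hat A$-mode $j$ produce, through a single triangle inequality spanning both error blocks, a lower bound featuring precisely $\|a_{i_1}^\circ-a_{i_2}^\circ\|_2$. This is what lets the distinguishability gap in \eqref{eq:distinguishable-2} close the contradiction; the remaining ingredients (the pigeonhole count and the uniform lower bound via $\gamma_m$) are routine once $\gamma_m>0$ is noted.
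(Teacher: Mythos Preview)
Your argument is correct and follows essentially the same route as the paper: both use Lemma~\ref{lem:dist-to-opt} for the upper bound on $\|\phi(A^\circ)-\phi(\hat A)\|_1$, a pigeonhole count (as in the proof of Lemma~\ref{lem:injectivity}) to produce the map $i\mapsto j^\star(i)$ with blocks of size at least $m$, and then the lower bound via $\gamma_m$ together with the triangle inequality $\|a_{i_1}^\circ-\hat a_j\|_2+\|a_{i_2}^\circ-\hat a_j\|_2\geq\|a_{i_1}^\circ-a_{i_2}^\circ\|_2$ to contradict \eqref{eq:distinguishable-2}. The only cosmetic difference is that the paper sums over all $i\in\mathbb{S}$ before extracting the two offending terms, whereas you restrict directly to the two blocks $B_1,B_2$.
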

\begin{proof}
See Appendix \ref{Proof:lem:Comparability-Estimate}. 
\end{proof}

\section{Conclusion}\label{sec:Conclusion}
In this paper we have studied some properties of the least sum-of-minimums (LSM)  absolute deviation estimator for switched system identification. Although this estimator is hard to implement numerically, it serves here as a reference estimator to analyze the degree of richness in the data for the identification scheme to be successful.  In particular, we have proposed a bound on the estimation error induced by this estimator. Interestingly, the expression of the proposed  bound involves explicitly some informativity measures of the training data. The message of that expression in essence is that the richer the data, the smaller the estimation error.  This opens a nice perspective for identification experiment design for switched systems. In effect, one can form an experiment design problem by searching for the input signal which optimizes the derived information-theoretic measures and thereby, the error bound delivered by the estimator. To further pave the avenue towards optimal experiment design, an intermediary step would, perhaps, be to complement the current analysis with one of the LSM estimator when used with the classical quadratic loss. Another important direction of research is to devise efficient numerical routines for estimating the informativity indices derived in this paper. 


\appendix
\section{Proofs}

\subsection{Proof of Lemma \ref{lem:First-Lem}}\label{Proof-lem:First-Lem}
For the sake of notational simplicity we use $\mathcal{T}_r$ in place of $\mathcal{T}_r(v)$.   Let $\mathcal{T}_r^c=\mathbb{T}\setminus \mathcal{T}_r$ be the complement of $\mathcal{T}_r$ in $\mathbb{T}$. 
Then 
 $$
\begin{aligned}
	\left\|v'-v\right\|_1 &= \left\|(v'-v)_{\mathcal{T}_r}\right\|_1+\left\|(v'-v)_{\mathcal{T}_r^c}\right\|_1\\
	& \leq \left\|(v'-v)_{\mathcal{T}_r}\right\|_1+\big\|v'_{\mathcal{T}_r^c}\big\|_1+\left\|v_{\mathcal{T}_r^c}\right\|_1\\
		& = \left\|(v'-v)_{\mathcal{T}_r}\right\|_1+\big\|v'_{\mathcal{T}_r^c}\big\|_1+\inf_{w\in \mathcal{S}_r}\left\|w-v\right\|_1
	\end{aligned}
$$
The inequality is derived from the triangle inequality property of the $\ell_1$ norm. The last equality relation relies on the fact that $\inf_{w\in \mathcal{S}_r}\left\|w-v\right\|_1=\left\|v_{\mathcal{T}_r^c}\right\|_1$ (the sum of the $N-r$ smallest entries in absolute value of $v$). Considering the term $\big\|v'_{\mathcal{T}_r^c}\big\|_1$, we can write
$$ 
\begin{aligned}
	\big\|v'_{\mathcal{T}_r^c}\big\|_1 &=\left\|v'\right\|_1-\left\|v'_{\mathcal{T}_r}\right\|_1\\
	& = \left\|v_{\mathcal{T}_r}\right\|_1-\left\|v'_{\mathcal{T}_r}\right\|_1+\left\|v'\right\|_1
	          -\left(\left\|v\right\|_1-\left\|v_{\mathcal{T}_r^c}\right\|_1\right)\\
	&\leq \left\|(v'-v)_{\mathcal{T}_r}\right\|_1+\left\|v'\right\|_1-\left\|v\right\|_1+\inf_{w\in \mathcal{S}_r}\left\|w-v\right\|_1
\end{aligned}
$$ 
Here, the second equality follows by adding and subtracting $\left\|v_{\mathcal{T}_r}\right\|_1$ while the last line is obtained by applying again the triangle inequality which gives $\big\|v_{\mathcal{T}_r}\big\|_1-\big\|v'_{\mathcal{T}_r}\big\|_1\leq \big\|(v'-v)_{\mathcal{T}_r}\big\|_1$. 
The result follows by combining the second inequality with the first one above.
\qed

\subsection{Proof of Lemma \ref{lem:comparability}}\label{Proof:lem:comparability}
By reasoning as in the proof of Lemma \ref{lem:injectivity} thanks to the fact that $A$ satisfies condition \eqref{eq:cond-Ni}, we reach easily the conclusion that 
for all $i\in \mathbb{S}$, there exists $i^*\in  \mathbb{S}$ such that $\left|I_i(A)\cap I_{i^*}(A')\right|\geq \nu_n(X)$. Let us define a map $\pi: \mathbb{S}\rightarrow  \mathbb{S}$ by posing $\pi(i)=i^*$. We need to show that $\pi$ can be selected to be a permutation under  condition \eqref{eq:Comparability-Cond} of the lemma.   For this purpose, we proceed by contradiction. Recall that $\pi$ is a permutation here if and only if it is injective.  And there is no injective map $\pi$ that  satisfies $\left|I_i(A)\cap I_{\pi(i)}(A')\right|\geq \nu_n(X)$ for all $i\in \mathbb{S}$,  
 if and only if there is a pair $(i,j)$, $i\neq j$, and an index $\ell\in \mathbb{S}$ such that 
\begin{subequations}\label{eq:Contradiction-Cond}
\begin{equation}
	\left\{\begin{aligned}
		&\left|I_i(A)\cap I_{\ell}(A')\right|\geq \nu_n(X) \\
		&\left|I_j(A)\cap I_{\ell}(A')\right|\geq \nu_n(X)
	\end{aligned}\right.
\end{equation}
\mbox{and  $\forall k\neq \ell$,} 
\begin{equation}
	\left\{\begin{aligned}
		&\left|I_i(A)\cap I_{k}(A')\right|< \nu_n(X) \\
		&\left|I_j(A)\cap I_{k}(A')\right|< \nu_n(X)
	\end{aligned}\right. 
	\end{equation}
\end{subequations}
Assume for contradiction that \eqref{eq:Contradiction-Cond} holds. 
Then, because $\left\{I_r(A')\right\}_{r\in \mathbb{S}}$ forms a partition of $\mathbb{T}$, $\left|I_i(A)\right|=\sum_{r=1}^s\left|I_i(A)\cap I_{r}(A')\right|<(s-1)\nu_n(X)+\left|I_i(A)\cap I_{\ell}(A')\right|$. Similarly, we can write, $\left|I_j(A)\right|<(s-1)\nu_n(X)+\left|I_j(A)\cap I_{\ell}(A')\right|$.
Hence $\left|I_i(A)\right|+\left|I_j(A)\right|<2(s-1)\nu_n(X)+\left|I_{i}(A)\cap I_{\ell}(A')\right|+\left|I_{j}(A)\cap I_{\ell}(A')\right|$. This is in contradiction with \eqref{eq:Comparability-Cond}. 
We therefore conclude on the existence of an injective  map (and hence of a permutation) $\pi:\mathbb{S}\rightarrow\mathbb{S}$. 
\qed

\subsection{Proof of Proposition \ref{prop:distinguishability}}\label{Proof:prop:distinguishability}
If \eqref{eq:Cond-sigA=sig} holds true, then  for all $t\in \mathbb{T}$ and all $i\in \mathbb{S}$ with $i\neq \sigma(t)$, 
$$\begin{aligned}
	\big|y_t-x_t^\top a_{\sigma(t)}^{\circ}\big|=|v_t|& <\dfrac{1}{2}\big|x_t^\top(a_{\sigma(t)}^{\circ}-a_i^{\circ})\big|\\
	& \leq \dfrac{1}{2}\big|y_t-x_t^\top a_i^{\circ}\big|+\dfrac{1}{2}\big|y_t-x_t^\top a_{\sigma(t)}^{\circ}\big|
\end{aligned}$$
where the last inequality is derived from the triangle inequality property of $|\cdot|$. 
 It follows that  $\big|y_t-x_t^\top a_{\sigma(t)}^{\circ}\big|<\big|y_t-x_t^\top a_i^{\circ}\big|$ which implies that $\sigma_{A^{\circ}}(t)=\sigma(t)$ for all $t$.  Conversely, if $\sigma_{A^{\circ}}=\sigma$, then for all $(j,t)\in \mathbb{S}\times \mathbb{T}$ such that $j\neq \sigma(t)$, we get immediately that  $\big|v_t\big|<\big|y_t-x_t^\top a_j^{\circ}\big|=\big|x_t^\top (a_{\sigma(t)}^{\circ}-a_j^{\circ})+v_t\big|$. Taking the square and dividing by $\big|x_t^\top (a_{\sigma(t)}^{\circ}-a_j^{\circ})\big|$ gives $\big|x_t^\top (a_{\sigma(t)}^{\circ}-a_j^{\circ})\big|>-2v_t s_j(t)$ with $ s_j(t)$ denoting the sign of $x_t^\top (a_{\sigma(t)}^{\circ}-a_j^{\circ})$.  The last inequality holds for any possible values of $\sigma$ if and only if $\big|x_t^\top (a_{i}^{\circ}-a_j^{\circ})\big|>2|v_t|$ for all $(i,j)\in \mathbb{S}^2$ with $i\neq j$.  
\qed

\subsection{Proof of Lemma \ref{lem:Comparability-Estimate}}\label{Proof:lem:Comparability-Estimate}
To begin with, let us  observe that by relying on Lemma \ref{lem:minimum-value}, it can be shown that the number $\gamma_m$ in \eqref{eq:gamma} is well defined and satisfies $\gamma_m>0$. 
By the same reasoning as in the proof of Lemma \ref{lem:injectivity}, we know that there  exists a map $\pi:\mathbb{S}\rightarrow\mathbb{S}$ such $\big|I_{i}(A^{\circ})\cap I_{\pi(i)}(\hat{A})\big|\geq m\geq \nu_n(X)$. We just need to establish that such a $\pi$ is bijective under the conditions of the lemma, a property which is equivalent here just to  injectivity of $\pi$. We proceed by contradiction. Suppose that $\pi$ is not injective, that is, we can find $(i,j)\in \mathbb{S}^2$ with $i\neq j$  such that $\pi(i)=\pi(j)$. Let $J_i=I_{i}(A^{\circ})\cap I_{\pi(i)}(\hat{A})$.  By applying Lemma \ref{lem:dist-to-opt}, we can write 
$$\begin{aligned}
	\sum_{i\in \mathbb{S}}\big\|X_{J_i}^\top(a_i^{\circ}-\hat{a}_{\pi(i)})\big\|_1&\leq \big\|\phi(A^{\circ})-\phi(\hat{A})\big\|_1\leq d,
\end{aligned}  $$
where we have posed $d=2\delta_r(A^{\circ})/(1-2\xi_r(\varpi^N))$ for conciseness.  
On the other hand, it follows from the definition \eqref{eq:gamma} of $\gamma_m$ that 
$\sum_{i\in \mathbb{S}}\big\|X_{J_i}^\top(a_i^{\circ}-\hat{a}_{\pi(i)})\big\|_1\geq \gamma_m \sum_{i\in \mathbb{S}}\|a_i^{\circ}-\hat{a}_{\pi(i)}\|_2$. As a consequence, we can write $\sum_{i\in \mathbb{S}}\|a_i^{\circ}-\hat{a}_{\pi(i)}\|_2\leq d/\gamma_m$. Hence, if $\pi(i)=\pi(j)$,  then by virtue of the triangle inequality,  $\|a_i^{\circ}-a_{j}^{\circ}\|_2\leq \|a_i^{\circ}-\hat{a}_{\pi(i)}\|_2+\|a_j^{\circ}-\hat{a}_{\pi(j)}\|_2\leq d/\gamma_m $. This is in contradiction with the assumption \eqref{eq:distinguishable-2}. We therefore conclude that the claim of the lemma holds true.  \qed

\subsection{On the estimation of the number $D$ in \eqref{eq:D-inf}}\label{subsec:estimate-D}
The following lemma provides a method for computing an underestimate of the parameter $D$ in \eqref{eq:D-inf} for a particular choice of the norm involved in its definition though at the price of a combinatorial complexity. 
%
\begin{lem}\label{lem:underestimate-D}
Assume that the norm used for the definition of the number $D$ in \eqref{eq:D-inf} is $\left\|\cdot\right\|_{2,\col}$ defined  by $\left\|\Lambda\right\|_{2,\col}=\sum_{i=1}^s\left\|\eta_i\right\|_2$  for $\Lambda=\big[\begin{matrix}\eta_1 & \cdots& \eta_s\end{matrix}\big]$. Let $m=\nu_n(X)$.
Then 
\begin{equation}\label{eq:Ineq-D-gamma}
	D\geq \gamma_{m}\geq \inf_{|I|= m}\lambda_{\min}^{1/2}(X_IX_I^\top),
\end{equation}
where $\gamma_m$ is the number defined in \eqref{eq:gamma} and $\lambda_{\min}^{1/2}(\cdot)$ denotes the square root of the minimum eigenvalue. The infimum is taken over all subsets $I$ of $\mathbb{T}$ with cardinality equal to $m$. 
\end{lem}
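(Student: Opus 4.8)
The plan is to establish the two inequalities in \eqref{eq:Ineq-D-gamma} separately: the first relating $D$ to $\gamma_m$ via a columnwise decomposition of $g$, and the second relating $\gamma_m$ to the minimal eigenvalue through the elementary norm bound $\left\|\cdot\right\|_1\geq\left\|\cdot\right\|_2$ together with a Rayleigh-quotient estimate.

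For the bound $D\geq\gamma_m$, I would start from an arbitrary $\Lambda=\big[\begin{matrix}\eta_1 & \cdots & \eta_s\end{matrix}\big]$ with $\left\|\Lambda\right\|_{2,\col}=\sum_{i\in\mathbb{S}}\left\|\eta_i\right\|_2=1$, and invoke (as in the proof of Theorem \ref{thm:Error-Bound}, where the infimum is over a finite set) the minimizing tuple $(J_1^\star,\ldots,J_s^\star)$, so that $g(\Lambda)=\sum_{i\in\mathbb{S}}\big\|X_{J_i^\star}^\top\eta_i\big\|_1$ with $|J_i^\star|\geq m=\nu_n(X)$. For each index $i$ with $\eta_i\neq 0$, factoring out $\left\|\eta_i\right\|_2$ and applying the definition \eqref{eq:gamma} of $\gamma_m$ to the unit vector $\eta_i/\left\|\eta_i\right\|_2$ and the admissible set $J_i^\star$ gives $\big\|X_{J_i^\star}^\top\eta_i\big\|_1\geq\gamma_m\left\|\eta_i\right\|_2$; this holds trivially when $\eta_i=0$. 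Summing over $i$ yields $g(\Lambda)\geq\gamma_m\sum_{i\in\mathbb{S}}\left\|\eta_i\right\|_2=\gamma_m$, and taking the infimum over all such $\Lambda$ in \eqref{eq:D-inf} gives $D\geq\gamma_m$.

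For the bound $\gamma_m\geq\inf_{|I|=m}\lambda_{\min}^{1/2}(X_IX_I^\top)$, I would fix any $I\subset\mathbb{T}$ with $|I|\geq m$ and any $\eta$ with $\left\|\eta\right\|_2=1$, and chain three estimates: first $\big\|X_I^\top\eta\big\|_1\geq\big\|X_I^\top\eta\big\|_2$; second $\big\|X_I^\top\eta\big\|_2^2=\eta^\top X_IX_I^\top\eta\geq\lambda_{\min}(X_IX_I^\top)$ by the Rayleigh quotient; and third, choosing any $I'\subseteq I$ with $|I'|=m$, the decomposition $X_IX_I^\top=X_{I'}X_{I'}^\top+\sum_{t\in I\setminus I'}x_tx_t^\top$ shows $X_IX_I^\top\succeq X_{I'}X_{I'}^\top$, hence $\lambda_{\min}(X_IX_I^\top)\geq\lambda_{\min}(X_{I'}X_{I'}^\top)\geq\inf_{|I''|=m}\lambda_{\min}(X_{I''}X_{I''}^\top)$. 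Combining these, $\big\|X_I^\top\eta\big\|_1\geq\inf_{|I''|=m}\lambda_{\min}^{1/2}(X_{I''}X_{I''}^\top)$ for every admissible pair $(I,\eta)$, and passing to the infimum in \eqref{eq:gamma} delivers the claim. As a sanity check, since $m=\nu_n(X)$ every $X_{I''}$ with $|I''|=m$ has rank $n$ by Definition \ref{def:genericity}, so $X_{I''}X_{I''}^\top$ is positive definite and the final lower bound is strictly positive, consistent with $D>0$.

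I do not expect a genuine obstacle here; the argument is short and reuses the structure already assembled in the proof of Theorem \ref{thm:Error-Bound}. The one point requiring a little care is the passage from the constraint $|I|\geq m$ in \eqref{eq:gamma} to the sets of exact cardinality $m$ appearing in the stated bound: this is precisely what the positive-semidefinite monotonicity $\lambda_{\min}(X_IX_I^\top)\geq\lambda_{\min}(X_{I'}X_{I'}^\top)$ for $I'\subseteq I$ handles, ensuring that enlarging $I$ never decreases the eigenvalue bound.
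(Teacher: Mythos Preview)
Your proof is correct and follows essentially the same route as the paper: both parts use the columnwise bound $\big\|X_{J_i^\star}^\top\eta_i\big\|_1\geq\gamma_m\left\|\eta_i\right\|_2$ for the first inequality and the chain $\left\|\cdot\right\|_1\geq\left\|\cdot\right\|_2$ followed by a Rayleigh-quotient argument for the second. If anything, you are slightly more explicit than the paper in justifying the passage from $|I|\geq m$ to $|I|=m$ via the PSD monotonicity $X_IX_I^\top\succeq X_{I'}X_{I'}^\top$ for $I'\subseteq I$; the paper simply writes $\inf_{\substack{\left\|\eta\right\|_2=1\\ |I|\geq m}}\big\|X_I^\top\eta\big\|_2=\inf_{|I|=m}\lambda_{\min}^{1/2}(X_IX_I^\top)$ as an equality without spelling out that step.
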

\begin{proof}
Recall from \eqref{eq:g(Lambda)} and the proof of Theorem \ref{thm:Error-Bound} the expression $g(\Lambda)=\sum_{i\in \mathbb{S}}\big\|X_{J_i^\star}^\top\eta_i\big\|_1$ of the function  $g$, where the $J_i^\star$ are subsets of $\mathbb{T}$ satisfying $|J_i^\star|\geq m=\nu_n(X)$. Then by substituting $\left\|\Lambda\right\|_{2,\col}$ for the norm $\left\|\Lambda\right\|$ in Eq. \eqref{eq:D-inf}, we have 
$$ \begin{aligned}
	D= \inf_{\left\|\Lambda\right\|_{2,\col}=1}g(\Lambda)
	&=\inf_{\left\|\eta_1\right\|_2+\cdots+\left\|\eta_s\right\|_2=1}\sum_{i\in \mathbb{S}}\big\|X_{J_i^\star}^\top\eta_i\big\|_1\\
	& \geq \inf_{\left\|\eta_1\right\|_2+\cdots+\left\|\eta_s\right\|_2=1}\sum_{i\in \mathbb{S}}\gamma_m\|\eta_i\|_2
	 = \gamma_m
\end{aligned}$$
The inequality follows as a consequence of the definition of $\gamma_m$ by which $\big\|X_{J_i^\star}^\top\eta_i\big\|_1\geq \gamma_m \left\|\eta_i\right\|_2$ since $|J_i^\star|\geq m$. Now, to prove the last inequality in \eqref{eq:Ineq-D-gamma}, it suffices to notice that 
 $\left\|X_{I}^\top \eta\right\|_1\geq \left\|X_{I}^\top \eta\right\|_2$. As a result, 
$$\begin{aligned}
	\gamma_m=\inf_{\substack{\left\|\eta\right\|_2=1\\|I|\geq m}}\big\|X_{I}^\top \eta\big\|_1
	&\geq \inf_{\substack{\left\|\eta\right\|_2=1\\|I|\geq m}}\big\|X_{I}^\top \eta\big\|_2\\
	&=\inf_{|I|= m}\lambda_{\min}^{1/2}(X_IX_I^\top).
\end{aligned}
$$  
\end{proof}
\noindent Given  $I\subset \mathbb{T}$, it is easy to obtain $\lambda_{\min}^{1/2}(X_IX_I^\top)$. Hence to obtain an (under)-estimate of $D$, we need to compute  ${N\choose m}$  such values and take the minimum of them. Here the notation ${N\choose m}$ refers to the binomial coefficient.  If we let $\hat{D}=\inf_{|I|= m}\lambda_{\min}^{1/2}(X_IX_I^\top)$, then it follows from \eqref{eq:Estimation-Bound} that 
$\|\hat{A}_{\pi}-A^{\circ}\|\leq \frac{2}{\hat{D}}\|\mathbf{v}\|_1$ in the particular case where $r$ is taken equal to $0$.

\balance
\bibliographystyle{abbrv}

\end{document}